\documentclass[12pt]{article}
\usepackage[utf8]{inputenc}
\usepackage{amsmath}
\usepackage{graphicx}
\usepackage{natbib}
\usepackage{url} 

\newcommand{\blind}{0}

\addtolength{\oddsidemargin}{-.5in}%
\addtolength{\evensidemargin}{-.5in}%
\addtolength{\textwidth}{1in}%
\addtolength{\textheight}{1.3in}%
\addtolength{\topmargin}{-.8in}%

\usepackage{framed}
\usepackage{xcolor} 
\usepackage{float}
 \usepackage{lineno}
\usepackage{hyperref}

\usepackage{amssymb}
\usepackage{amsthm}
\usepackage{amsmath} 
\usepackage{bm}

\usepackage{fancyvrb}

\usepackage{multirow}



\newcommand{\azul}{\color{black}}

\definecolor{mypurple}{rgb}{0.75, 0.0, 1.0}

\newcommand{\pkg}[1]{{\fontseries{b}\selectfont #1}}
\let\proglang=\textsf
\let\code=\texttt
\newcommand{\inla}{\pkg{R-INLA}}
\newcommand{\ve}[1]{{\bm{#1}}} 
\newcommand{\vekey}[1]{{\bm{#1}}} 

\newtheorem{theorem}{Theorem}[section]

\newtheorem{proposition}[theorem]{Proposition}

\DefineVerbatimEnvironment{Sinput}{Verbatim}{fontshape=sl}
\DefineVerbatimEnvironment{Soutput}{Verbatim}{}
\DefineVerbatimEnvironment{Scode}{Verbatim}{fontshape=sl}
\newenvironment{Schunk}{}{}
\DefineVerbatimEnvironment{Code}{Verbatim}{}
\DefineVerbatimEnvironment{CodeInput}{Verbatim}{fontshape=sl}
\DefineVerbatimEnvironment{CodeOutput}{Verbatim}{}

\setkeys{Gin}{width=0.8\textwidth }

\begin{document}

\def\spacingset#1{\renewcommand{\baselinestretch}%
{#1}\small\normalsize} \spacingset{1}


\if0\blind
{
  \title{\bf The Integrated Nested Laplace Approximation for fitting {Dirichlet regression models}}
  \author{Joaqu\'in Mart\'inez-Minaya \thanks{
    The authors gratefully acknowledge \textit{the Ministerio de Educaci\'on y Ciencia (Spain) for financial support (jointly financed by the European Regional Development Fund) via Research Grant PID2020-117979RB-I00 (DC and AL-Q), the Basque Government through the BERC 2018-2021 program and by the Ministry of Science, Innovation and Universities: BCAM Severo Ochoa accreditation SEV-2017-0718 and PID2020-115882RB-I00 research project (JM-M), the Canadian Natural Sciences and Engineering Research Council and the Canadian Research Chairs program (DS).}}\hspace{.2cm}\\
    Department of Applied Statistics and Operational Research, and Quality, \\ Universitat Polit\`ecnica de Val\`encia\\
    Finn Lindgren \\
    School of Mathematics, University of Edinburgh\\
    Antonio L\'opez-Qu\'ilez \\
    Department of Statistics and Operations Research, University of Valencia \\
    Daniel Simpson \\
     Department of Econometrics and Business Statistics, Monash University \\
    and \\
    David Conesa \\
    Department of Statistics and Operations Research, University of Valencia}
  \maketitle
} \fi

\if1\blind
{
  \bigskip
  \bigskip
  \bigskip
  \begin{center}
    {\LARGE\bf Title}
\end{center}
  \medskip
} \fi

\bigskip


\begin{abstract}
This paper introduces a Laplace approximation to Bayesian inference in Dirichlet regression models, which can be used to analyze a set of variables on a simplex exhibiting skewness and heteroscedasticity, without having to transform the data. These data, which mainly consist of proportions or percentages of disjoint categories, are widely known as compositional data and are common in areas such as ecology, geology, and psychology. We provide both the theoretical foundations and a description of how Laplace approximation can be implemented in the case of Dirichlet regression. The paper also introduces the package \pkg{dirinla} in the \pkg{R}-language that extends the \inla{} package, which can not deal directly with Dirichlet likelihoods. Simulation studies are presented to validate the good behaviour of the proposed method, while a real data case-study is used to show how this approach can be applied.
\end{abstract}

\noindent%
{\it Keywords:} Dirichlet regression, Hierarchical Bayesian models, INLA, multivariate likelihood, random effects
\vfill

\newpage
\spacingset{1.5} 

\section{Introduction}
\label{sec:intro}
\label{sec1}
{The use of regression models with multivariate or correlated responses has enormously increased in the last few years. Different R-packages have been developed to deal with them, \pkg{MCMCglmm} \citep{hadfield2010}, \pkg{SabreR} \citep{nowosad2018} or \pkg{mcglm} \citep{wagner2018}, also packages which uses copulas \citep{masarotto2017}. The use of Multinomial likelihood in the case of multivariate discrete response is one of the most popular \citep{monyai2016, odeyemi2019, piccini2019}.

Responses can also be continuous, such as Gaussian \citep{anderson1958} or compositional data \citep{aitchison2005,hijazi2009}. Of particular interest are compositional data which consist of proportions or percentages of disjoint categories summing up to one, and play an important role in many fields such as ecology \citep{kobal2017, douma2019}, geology \citep{buccianti2014, engle2014}, genomics \citep{tsilimigras2016,shi2016,washburne2017} or medicine \citep{dumuid2018, fairclough2018}.}

One of the biggest problems one has to face when dealing with models with multivariate or correlated responses is that of performing inference. Their own complexity makes statistical analysis complicated. In the case of compositional data, there are different approaches to deal with these additional complications. One method, due to \cite{aitchison1986}, is based in the idea that ``information in compositional vectors is concerned with relative, not absolute magnitudes'', and uses log-ratio analysis to deal with the unit-sum constraint \citep{aitchison1981,aitchison1982,aitchison1983,aitchison1984}. Dirichlet regression models \citep{hijazi2009} are another good way of analyzing compositional data. By using appropriate link functions, Dirichlet regression provides a GLM-like framework that {relates compositional data with other relevant variables of interest}. Beta regression can be considered a special, and effectively univariate, case of the former with only two categories.

Different packages have been implemented in \proglang{R} \citep{rmanual}  that analyze compositional data using Beta regression and Dirichlet regression, both under the frequentist \citep{cribari2010,maier2014} and the Bayesian paradigm. In the case of the latter, the largest challenge is implementing the posterior approximation. In particular, it has been implemented in \pkg{BayesX} \citep{klein2015}, \pkg{Stan} \citep{sennhenn2018}, \pkg{BUGS} \citep{VanderMerwe2018} and \pkg{R-JAGS} \citep{plummer2016}. These packages are mainly based on Markov chain Monte Carlo (MCMC) methods, which  construct a Markov chain whose stationary distribution converges to the posterior distribution. {\azul However, the computational cost of MCMC can be high}. On the other hand, the integrated nested Laplace approximation (INLA) methodology \citep{rue2009inla}, whose main idea is to approximate the posterior distribution using the Laplace integration method, has become an alternative to MCMC, guaranteeing a higher computational speed for Latent Gaussian models (LGMs).

The INLA methodology is now a well-established tool for Bayesian inference in several research fields, including ecology, epidemiology, econometrics and environmental science, {\azul and is implemented in the \inla{} package \citep{rue2017bayesian}}. Nevertheless, and spite of its availability for a large number of models, \inla{} does not allow to deal with compositional data when the number of categories is bigger than 2, the reason being that it is constructed for models with univariate responses.

Our objective in this work is twofold. {We present an expansion of the INLA method for the particular case of the Dirichlet regression, providing both its theoretical foundations and a description of how it can be implemented for its application, and we} introduce the package \pkg{dirinla} in the \proglang{R}-language that allows its practical use. To do so, the remainder of the paper is structured as follows.  Section \ref{sec:dirichlet} provides the basics of the Dirichlet regression, while Section \ref{sec:INLA} gives the necessary hints about LGMs and the INLA approach to follow the remainder of the paper. Section \ref{sec:inference} depicts the new approach, and Section \ref{sec:dirinla} introduces the \pkg{dirinla} package and how to use it. Simulation studies about the performance of the method introduced is presented in Section \ref{sec:simulation}, followed by an illustration of its use on real data in Section \ref{sec:real_example}. Finally, Section \ref{sec:conclusions} concludes.

\section[Dirichlet likelihood]{Dirichlet likelihood} \label{sec:dirichlet}
{In what follows we present both the Dirichlet distribution and the Dirichlet regression, while introducing some assumptions and the notation that will be used in the rest of the paper.}

\subsection{Dirichlet distribution}
The Dirichlet distribution is the generalization of the widely known Beta distribution, and it is defined by the following probability density,
\begin{linenomath}
\begin{equation}
	p(\ve{y} \mid \ve{\alpha})= \frac{1}{\text{B}(\ve{\alpha})} \prod_{c=1}^C y_c^{\alpha_c -1} \,,
    \label{dirichlet}
\end{equation}
\end{linenomath}
where $\ve{\alpha} = (\alpha_1, \ldots, \alpha_C)$ is known as the vector of shape parameters for each category, $\alpha_c>0$ $\forall c$, $y_c \in (0,1)$, $\sum_{c=1}^C y_c=1$, and $\text{B}(\ve{\alpha})$ is the Multinomial Beta function, which serves as the normalizing constant. The Multinomial Beta function is defined as $\prod_{c=1}^C \Gamma(\alpha_c)/ \Gamma(\sum_{c=1}^C \alpha_c)$. The sum of all $\alpha$'s, $\alpha_0=\sum_{c=1}^C \alpha_c$, is usually interpreted as a precision parameter ($\tau$). The Beta distribution is the particular case when $C=2$. In addition, each variable is marginally Beta distributed with $\alpha=\alpha_c$ and $\beta=\alpha_0-\alpha_c$.

Let $\ve{y} \sim \mathcal{D}(\ve{\alpha})$ denote a variable that is Dirichlet distributed. The expected values are $\text{E}(y_c)=\alpha_c/\alpha_0$, the variances are $\text{Var}(y_c)=[\alpha_c(\alpha_0 - \alpha_c)]/[\alpha_0^2(\alpha_0 + 1)]$ and the covariances are $\text{Cov}(y_c, y_{c'})=(-\alpha_c \alpha_{c'})/[\alpha_0^2(\alpha_0 + 1)]$.

\subsubsection{Dealing with zeros and ones} \label{sec:transformation}
{Dirichlet distributions, as it happens with Beta distributions, are defined in the open interval $(0,1)$. Nevertheless, in practice, data may come from the closed interval $[0, 1]$. Although converting compositional data into Multinomial data could be an option, there are many situations in real life (see references in the Introduction) where we have to use the continuous nature of the compositional data. In such cases, one option to deal with zeros and ones in Dirichlet distributions is to transform them \citep{maier2014}, in a similar way as it happens in Beta distributions \citep{smithson2006}. In particular:}
\begin{linenomath} \begin{equation} \label{eq:transformation}
	\ve{y}^* = \frac{\ve{y}(N - 1) + 1/C}{N} \,,
\end{equation} \end{linenomath}
{\azul with $N$ being the number of observations of the $C$ dimensional Dirichlet response}. This transformation compresses the data symmetrically around $0.5$, so extreme values are affected more than values lying close to $1/2$. Additionally, as it is pointed out in \cite{maier2014}, if $N \rightarrow \infty$ the compression vanishes, that is, larger data sets are less affected by this transformation. {\azul However, for small $N$, the transformation can introduce a noticeable bias. In the INLA software, a new method has been implemented for the Beta distribution that instead treats extreme observations as censored variables, but it is not clear how to extend this to the Dirichlet case.} 

From now on, we assume that {the variable takes} values in the open interval $(0,1)$, so that the transformation \eqref{eq:transformation} is not required.

\subsection{Dirichlet regression models}
Let $\vekey{Y}$ be a matrix with $C$ rows and $N$ columns denoting $N$ observations for the different categories $C$ of the $C$ dimensional response variable $\vekey{Y}_{\bullet n} \sim \mathcal{D}(\vekey{\alpha}_n)$. Let {\azul $\eta^*_{cn}$} be the linear predictor for the $n$th observation in the $c$th category, so {\azul $\vekey{\eta}^*$} is a matrix with $C$ rows and $N$ columns. Let $\vekey{V}^{(c)}$, $c=1, \ldots, C$, represents a matrix with dimension $N \times J_c$ that contains the covariate values for each individual and each category, so $\vekey{V}^{(c)}_{n \bullet}$ shows the covariate values for the $n$th observation and the $c$th category. Let $\vekey{\beta}$ be a matrix with $J_c$ rows and $C$ columns representing the regression coefficients in each dimension. Finally, let also {\azul $\omega_{cn}$} represents a realization of a random effect for the the $c$th category and the $n$th observation. Then, the relationship between the parameters of the Dirichlet distribution and the elements of the linear predictor (including random effects) is set up as:{\azul
\begin{linenomath}
\begin{equation}\label{eq:dirichlet_regression}
	g(\alpha_{cn}) = \eta^*_{cn} = \vekey{V}^{(c)}_{n\bullet} \vekey{\beta}_{\bullet c} + \omega_{cn}\,\,,
\end{equation}
\end{linenomath}
where $g(\cdot)$ is the link-function. As $\alpha_c>0$ for $c = 1,\ldots,C$, log-link  $g(\cdot) = \log(\cdot)$ is used. The regression coefficients $\vekey{\beta}_{\bullet c}$ are a column vector with $J_c$ elements.}

Dirichlet distributions can also be parametrized in terms of the mean $\mu_{cn}=\frac{\alpha_{cn}}{\sum_{n = 1}^N \alpha_{cn}}$ and the precision $\tau$. In this case, the relationship between the mean and the linear predictor is stated as:{\azul 
\begin{linenomath}
\begin{equation}\label{eq:dirichlet_regression_2}
\mu_{cn} = \frac{\exp{(\eta^{*\ve{\mu}}_{cn})}}{\sum_{c = 1}^C \exp{(\eta^{*\ve{\mu}}_{cn})}} = \frac{\exp{(\vekey{V}^{(c)}_{n\bullet} \vekey{\gamma}_{\bullet c} + \omega_{cn}^{\vekey{\mu}})}}{\sum_{c = 1}^C \exp{(\vekey{V}^{(c)}_{n\bullet} \vekey{\gamma}_{\bullet c} + \omega_{cn}^{\vekey{\mu}})}}  \,\,.
\end{equation}
\end{linenomath}}
The regression coefficients {\azul $\vekey{\gamma}_{\bullet c}$} are now a column vector with $J_c$ elements, while $\omega_{cn}^{\vekey{\mu}}$ represents now a realization of a random effect in this parametrization. Moreover, as one of the categories can be obtained as one less the sum of the rest, we can employ the same Multinomial logit strategy used in Multinomial regression: linear predictor of one category (base category) is set to zero, whereby this category is virtually omitted and becomes the reference.

The equivalence between the two parametrizations comes from the equalities{\azul
 \begin{linenomath}
\begin{equation}
\tilde{\ve{\gamma}}_{\bullet c} = \tilde{\ve{\beta}}_{\bullet c} - \tilde{\ve{\beta}}_{\bullet r}\,,
\end{equation}
\end{linenomath}
and
 \begin{linenomath}
\begin{equation}
\omega_{cn}^{\ve{\mu}} = \omega_{cn}- \omega_{rn}\,,
\end{equation}
\end{linenomath}
where $\tilde{\ve{\gamma}}_{\bullet c}$ is a column vector with $J \geq J_c$ elements that contains $J_c$ non-zero elements and $J - J_c$ zero-elements. $J$ dimension comes from a general matrix $\boldsymbol{V}$ with dimension $N \times J$ which contains all the covariates values ($J$) for all the observations used in all the categories. $\tilde{\ve{\beta}}_{\bullet c}$ is also a column vector with $J \geq J_c$ elements, and $\tilde{\ve{\beta}}_{\bullet r}$ and $\omega_{rn}$ are the regression coefficients and the realization of a random effect for the base category respectively.}

Note that we can include in the formula as many random effects as we consider and from a different nature: temporal, spatial, etc. {\azul But here, without loss of generality, we have added just one to show the equivalence between both parametrizations (proof in Appendix \ref{appendix_param})}. As the proposal presented in the following Sections relies on the fact that there are no other parameters to estimate in the observation likelihood, from now on we focus on the first parametrization presented.

Equation (\ref{eq:dirichlet_regression}) can be rewritten in a vectorized form. In particular, if {\azul
\[\vekey{\tilde{\eta}}=
  \underbrace{\begin{bmatrix}
    \vekey{\eta}^*_{\bullet 1} \\
    \vdots \\
    \vekey{\eta}^*_{\bullet N}
  \end{bmatrix}}_{CN \times 1} \, \
\]}  denotes a restructured linear predictor, {\azul being $\vekey{\eta}^*_{\bullet n}$ a column vector representing the linear predictor for the $n$th observation and all the categories,} the model in matrix notation is
\begin{linenomath}
\begin{equation}\label{eq:dirichlet_regression_matricial}
	\vekey{\tilde{\eta}} = \vekey{A} \vekey{x} \,,
\end{equation}
\end{linenomath}
where $\vekey{A}$ is the matrix properly constructed with the covariates values and vectors of 1s for the random effects, and $\vekey{x}$ a vector formed by the regression coefficients plus the realizations of the random effects. {\azul Posteriorly, we will refer to this vector as the Latent Gaussian random field.} 


\section[INLA for Latent Gaussian Models (LGMs)]{INLA for Latent Gaussian Models (LGMs)} \label{sec:INLA}

In this section, we start with a brief explanation about LGMs, the framework where we are going to fit the Bayesian Dirichlet regression (subsection \ref{subsec:LGM}), followed by the main idea of the Laplace approximation (subsection \ref{subsec:laplace}) and finishing with the INLA methodology (subsection \ref{subsec:INLA}).

\subsection{LGMs}\label{subsec:LGM}
The popularity of INLA stems from the fact that it allows for fast approximate inference for LGMs, which are a large class of models that include a lot of classically important models \citep{rue2005}. LGMs can be written as a three-stage hierarchical model in which observations $\ve{y}$ can be assumed to be conditionally independent given a latent Gaussian random field $\ve{x}$ and hyperparameters {\azul $\ve{\theta}_1$,}
\begin{linenomath}{\azul 
$$\ve{y} \mid \ve{x}, \ve{\theta}_1 \sim \prod_{n=1}^N p(y_n \mid x_n,\ve{ \theta}_1)\,.$$}
\end{linenomath}
The versatility of the model class relates to the specification of the latent Gaussian field
\begin{linenomath}{\azul
$$\ve{x} \mid \ve{\theta}_2 \sim \mathcal{N}(\ve{\mu}(\ve{\theta}_2), \ve{Q}^{-1}(\ve{\theta}_2))$$}
\end{linenomath}
which includes all the latent (non-observable) components of interest, such as fixed effects and random terms, describing the underlying process of the data. The hyperparameters {\azul $\ve{\theta}=(\ve{\theta}_1, \ve{\theta}_2)$} control the latent Gaussian field and/or the likelihood for the data.

The LGMs are a class generalising the large number of related variants of additive and generalized models. If the likelihood $p(y_n \mid x_n, \ve{\theta})$ such that ``$y_n$ only depends on its linear predictor $\eta_n$'' yields the generalized linear model setup, the set $\{x_n, n = 1, \ldots, N \}$ can be interpreted as $\eta_n$, being $\eta_n$ the linear predictor which is additive with respect to other effects,
\begin{linenomath}
\begin{equation} \label{eq:additive}
	\eta_n = \beta_0 + \sum_{j} v_{nj} \beta_j  + \sum_k \omega_{kn} \,,
\end{equation}
\end{linenomath}
where $\beta_0$ is the intercept, $\ve{v}$ represents the fixed covariates with linear effects $\{\beta_j\}$, and the terms $\{\ve{\omega}_{k}\}$ represent specific Gaussian processes. Each $\omega_{kn}$ is the contribution of the model components $\ve{\omega}_k$ to the $n$th linear predictor \citep{rue2017bayesian}. {If a Gaussian prior is assumed for the intercept and the parameters of the fixed effects}, the joint distribution of $\ve{x} = \{\ve{\eta}, \beta_0, \ve{\beta}, \ve{\omega}_1, \ve{\omega}_2, \ldots\}$ is \emph{a priori} Gaussian. This yields the latent field $\ve{x}$ in the hierarchical LGM formulation. The  hyperparameters $\ve{\theta}$ contain the non-Gaussian parameters of the likelihood and the model components. These parameters commonly include  variance, scale, or correlation parameters.

In many important cases, the latent field is not only Gaussian, but also sparse Gaussian Markov random field \citep[GMRF]{rue2005}. A GMRF is  a multivariate Gaussian random variable with additional conditional independence properties: $x_j$ and $x_j'$ are conditionally independent given the remaining elements if and only if the $(i,j)$ entry of the precision matrix is $0$. Implementations of the INLA method frequently use this property to speed up computation.

\subsection{Laplace Approximation}\label{subsec:laplace}
Laplace approximation \citep{barndorff1989} is a technique used to approximate integrals of the form
\begin{linenomath}
\begin{equation}
	I_n=\int \exp(nf(x))\,\mathrm{d}x .
\end{equation}
\end{linenomath}
The main idea is to approximate the target with a scaled Gaussian density that matches the value and the curvature of the target distribution at the mode and evaluate the integral using this Gaussian instead.
If $x_0$ is the point where $f(x)$ has its maximum, then
\begin{linenomath}
\begin{align}
	I_n & \approx  \int \exp(n(f(x_0) + \frac{1}{2} (x-x_0)^2 f''(x_0)))\,\mathrm{d}x \nonumber \\
	& =  \exp(nf(x_0)) \sqrt{\frac{2 \pi}{-nf''(x_0)}} = \tilde{I}_n \,.
\end{align}
\end{linenomath}
If $nf(x)$ is interpreted as the sum of log-likelihoods and $x$ as the unknown parameter, the Gaussian approximation will be very accurate as $n\rightarrow \infty$ under appropriate regularity conditions.

If we are interested in computing a marginal distribution $p(\gamma_1)$ from a joint distribution $p(\ve{\gamma})$, the Laplace approximation of the integral
$\int p(\ve{\gamma}) \,\mathrm{d}\ve{\gamma}_{-1}$ can be expressed as follows:
\begin{linenomath}
\begin{align}
	p(\gamma_1) & =\left.\frac{p(\ve{\gamma})}{p(\ve{\gamma}_{-1} \mid \gamma_1)}\right|_{\ve{\gamma}_{-1}=\ve{\gamma}^*_{-1}} \nonumber \\
	& \approx\left.\frac{p(\ve{\gamma})}{p_G(\ve{\gamma}_{-1}; \ve{\mu}(\gamma_1), \ve{Q}(\gamma_1))} \right|_{\ve{\gamma}_{-1}=\ve{\gamma}^*_{-1}= \ve{\mu}(\gamma_1)} \,,
\end{align}
\end{linenomath}
where the first equality holds for any valid $\ve{\gamma}^*_{-1}$, and the mean $\ve{\mu}(\gamma_1)$ and precision $\ve{Q}(\gamma_1)$ are the parameters of the multivariate Gaussian density derived from the derivatives of $\log p(\ve{\gamma})$ with respect to $\ve{\gamma}_{-1}$, for fixed $\gamma_1$. If the posterior is close to a Gaussian density, the results will be more accurate than if the posterior is very non-Gaussian. In this context, unimodality is necessary since the integrand is being approximated with a Gaussian at the mode $\ve{\gamma}^*_{-1}=\ve{\mu}(\gamma_1)$.

\subsection{INLA} \label{subsec:INLA}
The main idea of INLA approach is to approximate the posteriors of interest: the marginal posteriors for the latent field, $p(x_m \mid \ve{y})$, and the marginal posteriors for the hyperparameters, $p(\theta_k \mid \ve{y})$. These posteriors can be written as
\begin{linenomath}
\begin{align}
	p(x_m \mid \ve{y}) & =  \int p(x_m \mid \ve{\theta}, \ve{y}) p(\ve{\theta} \mid \ve{y}) \,\mathrm{d} \ve{\theta} \,, \label{eq:marginals1} \\
	p(\theta_k \mid \ve{y}) & =  \int p(\ve{\theta} \mid \ve{y}) \,\mathrm{d} \ve{\theta}_{-k} \label{eq:marginals2} \,.
\end{align}
\end{linenomath}
The nested formulation is used to compute $p(x_m \mid \ve{y})$ by approximating $p(x_m \mid \ve{\theta}, \ve{y})$ and $p(\ve{\theta} \mid \ve{y})$, and then using numerical integration to integrate out $\ve{\theta}$. Similarly, $p(\theta_k \mid \ve{y})$ can be computed by approximating $p(\ve{\theta} \mid \ve{y})$ and integrating out $\ve{\theta}_{-k}$.

The marginal posterior distributions in (\ref{eq:marginals1}) and (\ref{eq:marginals2}) are computed using the Laplace approximation presented in subsection \ref{subsec:laplace}. In \cite{rue2009inla} it is shown that the nested approach yields a very accurate approximation if applied to LGMs.

All this methodology can be used through \proglang{R} with the \inla{} package. For more details about \inla{} we refer the reader to {\azul \cite{blangiardo2015,zuur2017beginner,faraway2018,krainski2018,moraga2019,gomez-rubio2020}}, where practical examples and code guidelines are provided.

However, and despite the advantages of \inla{} implementation, there are some limitations {when we deal with multivariate responses. The Multinomial case is solved, as the Multinomial likelihood can be approximated using the Poisson trick. It consists of transforming the Multinomial likelihood into a Poisson likelihood with additional parameters \citep{baker1994}. However, there is no method available when we deal with compositional data}. In what follows, we propose an expansion of the INLA method {for a Dirichlet response variable.}

\section{{Inference in Dirichlet likelihoods}} \label{sec:inference}
The INLA methodology is a tool that allows us to deal with a wide range of LGMs. However, when a multivariate response is required and several linear predictors are needed to explain it, the implemented \inla{} methodology has some limitations. {In the particular case of Dirichlet likelihoods, the main idea to incorporate them in the \pkg{R-INLA} is first to approximate the effect of the log likelihood on the posterior using the Laplace approach and then convert the multivariate response data into observations that \pkg{R-INLA} can deal with. The remainder of the Section presents both the theoretical fundamentals to approximate the effect of the log-likelihood function $\log p(\ve{Y} \mid \ve{x}, \ve{\theta})$ in the posterior using the Laplace approximation that provides the conditioned independent Gaussian pseudo-observations, and then an algorithmic representation of the method.}

\subsection{Fundamentals of the approximation}

Let $\ve{\eta}_{n}:=\ve{\eta}^*_{\bullet n}$ denote the linear predictor corresponding to the $n$th observation $\ve{y}_n:=\ve{Y}_{ \bullet n }$. If $l(\ve{y} \mid \ve{x})$ represents $-\log p(\ve{y} \mid \ve{x})$ for any $\ve{y}$ and $\ve{x}$, then $l(\ve{y}_n \mid \ve{\eta}_n)=-\log p(\ve{y}_n\mid \ve{\eta}_{n})$ is the log-likelihood function expressed for the $n$th observation, being $\ve{y}_n$ and $\ve{\eta}_n$ vectors with $C$ components.
{\azul
Moreover, if $\ve{\eta}^0_{n}$ is a vector with dimension $C$, we express the gradient of $l(\ve{y}_n \mid \ve{\eta}_n)$ in $\ve{\eta}^0_{n}$ as ${\ve{g}^0_\ve{\eta}}_n=\nabla_{\ve{\eta}_n}(l)(\ve{\eta}^0_n, \ve{y}_n)$, and the Hessian of $l(\ve{y}_n \mid \ve{\eta}_n)$ in $\ve{\eta}^0_n$ as $\ve{H}^0_{\ve{\eta}_n}$. Depending on which is more computationally convenient, $\ve{H}^{0}_{\ve{\eta}_n}$ can be either the true Hessian ($\nabla^2_{\ve{\eta}_n}(l)(\ve{\eta}^0_n, \ve{y}_n)$) or the expected Hessian ($\text{E}_{\ve{y}_n \mid {\ve{\eta}_n}} (\nabla^2_{\ve{\eta}_n}(l)(\ve{\eta}^0_n, \ve{y}_n))$) in $\ve{\eta}^0_n$. Let $\ve{L}^{0}_n$ be the result of applying the Cholesky factorization to $\ve{H}^{0}_{\ve{\eta}_n}$, $\ve{H}^{0}_{\ve{\eta}_n} = \ve{L}^{0}_n (\ve{L}^{0}_n)^T$.}
\begin{theorem}{\label{theorem_laplace}}
 If the Laplace approximation method is applied to $l(\ve{y}_n \mid \ve{\eta}_n)$ with respect to {\azul $\ve{\eta}^0_n$}, then the vector
\begin{linenomath}{\azul 
\begin{align}\label{eq:z_var_definition}
    \ve{z}^0_n & := (\ve{L}^{0}_n)^T [\ve{\eta}^0_n - (\ve{H}^{0}_{{\eta}_n})^{-1} \ve{g}^{0}_{\ve{\eta}_n}] = (\ve{L}^{0}_n)^T \ve{\eta}^0_n - (\ve{L}^{0}_n)^{-1} \ve{g}^{0}_{\ve{\eta}_n} \,,
\end{align}}
\end{linenomath}
is conditionally independent Gaussian distributed
\begin{linenomath}{\azul
\begin{align} \label{eq:log_likelihood_taylor_1}
  l(\ve{y}_n \mid \ve{\eta}_n) & \approx  l(\ve{z}^0_n \mid \ve{\eta}_n) = Constant + \frac{1}{2}[\ve{z}^0_n-(\ve{L}^{0}_n)^T \ve{\eta}_n]^T  [\ve{z}^0_n-(\ve{L}^0_n)^T \ve{\eta}_n] \,,
\end{align}}
\end{linenomath}{\azul
{i.e., $\ve{z}^0_n \mid \ve{\eta}_n \sim \mathcal{N}((\ve{L}^{0}_n)^T \ve{\eta}_n, \ve{I}_{d})$ and $z^0_{cn} \mid \ve{\eta}_n \sim \mathcal{N}([(\ve{L}^{0}_n)^T \ve{\eta}_n]_c, 1)$, and
 the constant value of the expression is $l(\ve{y}_n \mid \ve{\eta}^0_n) - \frac{1}{2} (\ve{g}^0_{\ve{\eta}_n})^T (\ve{H}^{0}_{\ve{\eta}_n})^{-1} \ve{g}^{0}_{\ve{\eta}_n}$.}}

\end{theorem}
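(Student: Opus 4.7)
The plan is to verify the identity by direct computation: expand both sides around $\ve{\eta_0}_n$ and use the Cholesky factorization of the Hessian to complete the square. The starting point is the standard second-order Taylor expansion that underlies the Laplace approximation,
\begin{equation*}
l(\ve{y}_n \mid \ve{\eta}_n) \approx l(\ve{y}_n \mid \ve{\eta_0}_n) + \ve{g_{0\eta}}_n^T (\ve{\eta}_n - \ve{\eta_0}_n) + \tfrac{1}{2}(\ve{\eta}_n - \ve{\eta_0}_n)^T \ve{H_{0\eta}}_n (\ve{\eta}_n - \ve{\eta_0}_n),
\end{equation*}
which is justified because $\ve{\eta_0}_n$ is the expansion point chosen in the Laplace method and $\ve{H_{0\eta}}_n$ is (by assumption of Laplace validity) positive definite, so that the Cholesky factor $\ve{L_{0}}_n$ exists and is invertible.

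Next I would substitute $\ve{H_{0\eta}}_n = \ve{L_{0}}_n \ve{L_{0}}_n^T$ into the quadratic term and introduce the shorthand $\ve{w}_n := \ve{z_0}_n - \ve{L_{0}}_n^T \ve{\eta}_n$. A direct expansion gives
\begin{equation*}
\ve{w}_n = -\ve{L_{0}}_n^T (\ve{\eta}_n - \ve{\eta_0}_n) - \ve{L_{0}}_n^{-1} \ve{g_{0\eta}}_n,
\end{equation*}
and multiplying out $\tfrac{1}{2} \ve{w}_n^T \ve{w}_n$ produces three terms. Using $\ve{L_{0}}_n \ve{L_{0}}_n^T = \ve{H_{0\eta}}_n$ and $(\ve{L_{0}}_n^{-1})^T \ve{L_{0}}_n^{-1} = \ve{H_{0\eta}}_n^{-1}$, together with the cancellation $(\ve{L_{0}}_n^{-1})^T \ve{L_{0}}_n^T = \ve{I}$, these three terms become exactly the quadratic term of the Taylor expansion, the linear term $\ve{g_{0\eta}}_n^T (\ve{\eta}_n - \ve{\eta_0}_n)$, and a residual $\tfrac{1}{2}\ve{g_{0\eta}}_n^T \ve{H_{0\eta}}_n^{-1} \ve{g_{0\eta}}_n$ independent of $\ve{\eta}_n$.

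Reorganising, the constant appearing in \eqref{eq:log_likelihood_taylor_1} is forced to equal $l(\ve{y}_n \mid \ve{\eta_0}_n) - \tfrac{1}{2}\ve{g_{0\eta}}_n^T \ve{H_{0\eta}}_n^{-1} \ve{g_{0\eta}}_n$, which matches the expression in the statement. Finally, to obtain the Gaussian interpretation, I would observe that since $l = -\log p$, exponentiating the right-hand side of \eqref{eq:log_likelihood_taylor_1} gives, up to the $\ve{\eta}_n$-free constant, the density of a $\text{N}(\ve{L_{0}}_n^T \ve{\eta}_n, \ve{I_d})$ random vector evaluated at $\ve{z_0}_n$; independence of the components ${z_0}_{ik}$ conditional on $\ve{\eta}_n$ then follows from the identity covariance.

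The main obstacle is essentially bookkeeping: keeping the transpose/inverse structure of $\ve{L_{0}}_n$ straight so that the cross terms correctly cancel and the linear term reproduces $\ve{g_{0\eta}}_n$ rather than something with extra factors. Once the algebra is done carefully there is no analytic difficulty, since the whole argument is the standard ``complete the square in the Gaussian'' trick applied in the whitened coordinates defined by $\ve{L_{0}}_n^T$.
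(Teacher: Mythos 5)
Your proposal is correct and follows essentially the same route as the paper's proof in Appendix A: a second-order Taylor expansion of $l(\ve{y}_n\mid\ve{\eta}_n)$ at $\ve{\eta_0}_n$ followed by completing the square via the Cholesky factorization $\ve{H_{0\eta}}_n=\ve{L_{0}}_n\ve{L_{0}}_n^T$, with the same constant emerging. The only (cosmetic) difference is that you expand the claimed whitened quadratic $\tfrac12\ve{w}_n^T\ve{w}_n$ and match it back to the Taylor expansion, whereas the paper completes the square in the original coordinates first and then factors.
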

\begin{proof}
For proof of the theorem see Appendix \ref{append_likelihood_approx}.
\end{proof}

Theorem \ref{theorem_laplace} allows us to  convert the observation vector $\ve{y}_n$ {into Gaussian conditionally independent pseudo-observations {\azul$\ve{z}^0_n$.} More importantly, this theorem can be expanded to multiple observations. In particular, if we denote {\azul
\begin{linenomath}\[
  {\ve{\tilde{\eta}}^0}=
  \underbrace{\begin{bmatrix}
    \ve{\eta}^{*0}_{\bullet 1} \\
    \vdots \\
    \ve{\eta}^{*0}_{\bullet N}
  \end{bmatrix}}_{CN \times 1} \,, \
\ve{g}^{0}_{\ve{\tilde{\eta}}}=
  \underbrace{\begin{bmatrix}
    \ve{g}^0_1 \\
    \vdots \\
    \ve{g}^0_N
  \end{bmatrix}}_{CN \times 1} \,, \
\ve{L}^{0}=
  \underbrace{\begin{bmatrix}
    \ve{L}^{0}_1 & & 0   \\
    & \ddots & \\
    0 & & \ve{L}^{0}_N
  \end{bmatrix}}_{CN \times CN} \,, \
\ve{H}^{0}_\ve{\tilde{\eta}}=
  \underbrace{\begin{bmatrix}
    \ve{H}^{0}_{\ve{\eta}_1} & & 0   \\
    & \ddots & \\
    0 & & \ve{H}^{0}_{\ve{\eta}_N}
  \end{bmatrix}}_{CN \times CN} \,,
\]\end{linenomath}}
then the following proposition stands.

\begin{proposition}{\label{proposition_laplace}}
The matrix
\begin{linenomath}{\azul
\begin{align}\label{eq:z_var_definition_mult}
  \ve{\tilde{z}}^0 &:= (\ve{L}^{0})^T \ve{\tilde{\eta}}^0 - (\ve{L}^{0})^{-1} \ve{g}^{0}_{\ve{\tilde{\eta}}} \,
\end{align}}
\end{linenomath}
is conditionally independent Gaussian distributed by columns,
\begin{linenomath}{\azul
\begin{align}
  \ve{\tilde{z}}^0 \mid \ve{\tilde{\eta}} & \sim  \mathcal{N}((\ve{L}^{0})^T \ve{\tilde{\eta}}, \ve{I}_{CN})\,.
\end{align}}
\end{linenomath}
\end{proposition}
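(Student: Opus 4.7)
The plan is to reduce Proposition 2.1 to a stacked, block-diagonal reformulation of Theorem 2.1. The crucial structural fact to exploit is that the observations $\ve{y}_n$ are conditionally independent given their linear predictors $\ve{\eta}_n$ (this is the product form of the likelihood in the LGM framework of Section 3.1). Consequently the joint negative log-likelihood decomposes as $l(\ve{Y}\mid\ve{\tilde{\eta}})=\sum_{n=1}^N l(\ve{y}_n\mid\ve{\eta}_n)$, so the gradient of $l(\ve{Y}\mid\ve{\tilde{\eta}})$ with respect to $\ve{\tilde{\eta}}$ at the expansion point $\ve{\tilde{\eta}_0}$ is exactly the stacked vector $\ve{g_{0\tilde{\eta}}}$, and the Hessian is block diagonal with $n$th block $\ve{H_{0\eta}}_n$, i.e.\ exactly $\ve{H_{0\tilde{\eta}}}$. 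In particular, the Cholesky factor of $\ve{H_{0\tilde{\eta}}}$ is block diagonal with blocks $\ve{L_{0}}_n$, which is precisely the matrix $\ve{L_{0}}$ defined before the proposition.

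The first step is to apply Theorem \ref{theorem_laplace} to the full joint log-likelihood, viewing $\ve{\tilde{\eta}}$ as the predictor, $\ve{Y}$ as the response, and $\ve{\tilde{\eta}_0}$ as the expansion point. This directly produces an approximation of the form (\ref{eq:log_likelihood_taylor_1}) with $\ve{g_{0\tilde{\eta}}}$, $\ve{H_{0\tilde{\eta}}}$, $\ve{L_{0}}$ in place of the per-observation quantities, and the pseudo-observation given by the right-hand side of (\ref{eq:z_var_definition_mult}). The second step is to verify that this joint definition is consistent with assembling the per-observation $\ve{z_0}_n$ from Theorem \ref{theorem_laplace}: since $\ve{L_{0}}$ is block diagonal, both $\ve{L_{0}}^T$ and $\ve{L_{0}}^{-1}$ are block diagonal with blocks $\ve{L_{0}}_n^T$ and $\ve{L_{0}}_n^{-1}$ respectively, so the $n$th block of $\ve{\tilde{z}_0}$ equals $\ve{L_{0}}_n^T\ve{\eta_0}_n - \ve{L_{0}}_n^{-1}\ve{g_{0\eta}}_n = \ve{z_0}_n$.

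The third step is to read off the Gaussian conditional distribution. Applied jointly, Theorem \ref{theorem_laplace} yields $\ve{\tilde{z}_0}\mid\ve{\tilde{\eta}}\sim\text{N}(\ve{L_{0}}^T\ve{\tilde{\eta}},\ve{I_{CN}})$, which is the claim. Alternatively, one can obtain the same conclusion componentwise: Theorem \ref{theorem_laplace} gives $\ve{z_0}_n\mid\ve{\eta}_n\sim\text{N}(\ve{L_{0}}_n^T\ve{\eta}_n,\ve{I_C})$, and conditional independence of the $\ve{y}_n$ (hence of the $\ve{z_0}_n$) given $\ve{\tilde{\eta}}$ transfers directly to independence of the blocks of $\ve{\tilde{z}_0}$, stacking into the joint Gaussian with block-diagonal (actually identity) covariance.

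There is no serious obstacle: the content is essentially bookkeeping once the block-diagonal structure is recognised. The only point that requires mild care is that the total approximation error from Theorem \ref{theorem_laplace} accumulates additively across the $N$ observations, but because the per-observation approximations are independent and the assembly is linear in the gradient and Hessian pieces, the stacked statement holds to the same order as the single-observation one under the regularity conditions already used for Theorem \ref{theorem_laplace}.
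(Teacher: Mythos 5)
Your proposal is correct and follows essentially the same route as the paper: the appendix likewise uses conditional independence to write $l(\ve{Y}\mid\ve{\tilde{\eta}})$ as the sum of the per-observation quadratic (Taylor/Laplace) approximations, assembles the stacked gradient and block-diagonal Hessian and Cholesky factor, and rewrites the sum as a single quadratic form in $\ve{\tilde{z}_0}-\ve{L_{0}}^T\ve{\tilde{\eta}}$ from which the $\text{N}(\ve{L_{0}}^T\ve{\tilde{\eta}},\ve{I_{CN}})$ conditional law is read off. Your block-by-block consistency check that the $n$th block of $\ve{\tilde{z}_0}$ equals $\ve{z_0}_n$ is exactly the bookkeeping the paper performs implicitly via its stacked notation.
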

\begin{proof}
    For proof of this proposition see Appendix \ref{append_likelihood_approx}.
\end{proof}
This approximation has been constructed for a generic {\azul $\ve{\tilde{\eta}}^0$}, but, as we are interested in building a Gaussian approximation of the effect of the likelihood on the posterior distribution, {\azul $\ve{\tilde{\eta}}^0$} has been chosen as the posterior mode of $l(\ve{\tilde{\eta}} \mid \ve{Y})$. Then: {\azul 
\begin{equation}
p(\vekey{Y} \mid \vekey{x}, \vekey{\theta}) = p( \vekey{Y} \mid \ve{\tilde{\eta}}) \approx p(\ve{\tilde{z}}^0 \mid \ve{\tilde{\eta}}) = p(\ve{\tilde{z}}^0 \mid \vekey{x}, \vekey{\theta}) \,.
\end{equation}
The model posterior is factorized as:
\begin{equation}
p(\vekey{\theta}, \vekey{x} \mid \vekey{y}) = p(\vekey{\theta} \mid \vekey{y}) \cdot p(\vekey{x} \mid \vekey{y}, \vekey{\theta}) \,,
\end{equation}
and the approximation is factorized as:
\begin{equation}
\overline{p}(\vekey{\theta}, \vekey{x} \mid \vekey{\tilde{z}}^0) = \overline{p}(\vekey{\theta} \mid \ve{\tilde{z}}^0) \cdot \overline{p}(\vekey{x} \mid \ve{\tilde{z}}^0, \vekey{\theta}) \,.
\end{equation}}
Note that the approximation is constructed for a generic $\ve{\tilde{\eta}}$, in other words, this approximation is conditionally dependent on the linear predictor. The linear predictor can be formed by fixed effects or random effects. Here, in the implementation of the algorithm, we focus on the case where fixed effects and iid random effects are added to the linear predictor.

\subsection{The algorithm} \label{sec:inla_dirichlet}
In what follows, we depict the different steps to compute the marginal posterior distributions of the latent field, $p(\ve{x}_m \mid \ve{Y})$, and the marginal posterior distribution of the hyperparameters {\azul $p(\ve{\theta}_k \mid \ve{Y})$}. To obtain them, it is necessary to numerically find the mode of the posterior distribution of the linear predictor {\azul $\ve{\tilde{\eta}}^0$}. This can be done by means of an iterative method in a similar way to \pkg{inlabru} \citep{bachl2019}.
{\azul
We define a functional $f(\overline{p}_{\ve{u}})$ of the posterior distribution at $\ve{u}$ which generates a latent field configuration. Our aim is to find {\azul an invariant point} of the functional, so that $\ve{x}^0 = f(\overline{p}_{\ve{x}^0})$. The choice for $f(\cdot)$ is the joint conditional mode $f(\overline{p}_{\ve{x}}) =$ arg max$_{\ve{x}} \overline{p}_{\ve{x}} (\ve{x} \mid \ve{\tilde{z}}^0, \ve{\theta}^0)$, being $\ve{\theta}^0 =$ arg max$_{\ve{\theta}} \overline{p}_{\ve{\theta}} (\ve{\theta} \mid \ve{\tilde{z}}^0)$.
}
The final algorithm is:

\begin{enumerate}
\item Let $\ve{x}^1$ and $\ve{\theta}^1$ be initial candidate points for the latent variables and the hyperparameters.

\item \textbf{Locate a good candidate in the latent variables for the computation of the new pseudo-observations}. Compute the mode {\azul ($\ve{x}^2$) in $\ve{x}$ of $p(\ve{x} \mid \ve{Y}, \ve{\theta}^1)$ by means of a quasi-Newton method \citep{dennis1977} with line search strategy and Armijo conditions \citep{nocedal2006} in $-\log (p(\ve{x} \mid \ve{Y}, \ve{\theta}^1) \propto l(\ve{Y} \mid \ve{x}, \ve{\theta}^1) + l(\ve{x} \mid \ve{\theta}^1)$, being $p(\ve{x} \mid \ve{\theta}^1)$ multivariate Gaussian as we are in the LGMs context. $\ve{\tilde{\eta}}^1$} can be easily calculated from the expression $\ve{\tilde{\eta}} = \ve{A} \ve{x}$.

\item {\azul \textbf{Calculate the conditionally independent Gaussian pseudobservations $\ve{\tilde{z}}^1$.} In this case, at the modal configuration established $\ve{x}^1$, the Hessian matrix $\ve{H}^{1}_{\ve{\tilde{\eta}}}$ is computed. If the submatrix corresponding to the {$n$th individual} $\ve{H}^{1}_{\tilde{\ve{\eta}}_n}$ is not positive definite, the expected Hessian is used instead to guarantee a positive definite $\ve{H}^{1}_{\ve{\tilde{\eta}}}$. Following the approximation previously presented, the Cholesky factorization is computed in $\ve{H}^1_{\ve{\tilde{\eta}}}=  \ve{L}^{1} (\ve{L}^{1})^T$. {The gradient} ($\ve{g}^{1}_{\ve{\tilde{\eta}}}$) is also calculated in $\ve{\tilde{\eta}}^1$. According to the equation (\ref{eq:z_var_definition_mult}), the scale and rotation of the original observations are done to get the pseudo-observations $\ve{\tilde{z}}^1$.}

{\azul
\item \textbf{Call \pkg{R-INLA}}. As pseudo-observations $\ve{\tilde{z}}^1$ are conditionally independent Gaussian observations, we are able to call \pkg{R-INLA}. If in Step 2 algorithm has converged for a given tolerance, the posterior distributions obtained here are the one that we were looking for: $\overline{p}(\ve{\theta} \mid  \ve{\tilde{z}}^1)$ and $\overline{p}(\ve{x} \mid  \ve{\tilde{z}}^1)$. Else, repeat from step 1 defining $\ve{x}^3 = f(\overline{p}_{\ve{x}^2})$ as the new starting point for the latent variables, and $\ve{\theta}^2 =$ arg max$_{\ve{\theta}} \overline{p}_{\ve{\theta}} (\ve{\theta} \mid \ve{\tilde{z}}^1)$ the new starting point for the hyperparameters.}
\end{enumerate}

After depicting the complete method, we focus on an implementation in \proglang{R} \citep{rmanual} of this approximation.

\section{The R-package dirinla}{\label{sec:dirinla}}
In what follows we {present \pkg{dirinla}, an \proglang{R} \citep{rmanual} package developed to fit Dirichlet regression models. This package can be installed and upgraded via the repository {\azul \url{https://github.com/inlabru-org/dirinla}}. To show how it works, we present an example of a Dirichlet regression without random effect, although, they can be included in the model, as done in Section~\ref{sec:simulation}. Then, this Section is divided in three parts: the first one presents the necessary commands to perform a simulation from a Dirichlet regression model; the second one is devoted to show how to fit those models; and the last one depicts how to predict using the package} {\azul (An extended version of this example is available in the vignette of the R-package \pkg{dirinla}}). In particular, we firstly illustrate how to simulate {100} data points from a Dirichlet regression model with {three} different categories and one different covariate per category:
\begin{linenomath}
\begin{align}
  \ve{Y}_{\bullet n} & \sim  \text{Dirichlet}(\alpha_{1n}, \ldots, \alpha_{3n}) \,, n = 1, \ldots, 100, \nonumber \\
  \log(\alpha_{1n}) & =  \beta_{01} + \beta_{11} v_{1n}, \nonumber \\
  \log(\alpha_{2n}) & =  \beta_{02} + \beta_{12} v_{2n},  \\
  \log(\alpha_{3n}) & =  \beta_{03} + \beta_{13} v_{3n}, \nonumber
\end{align}
\end{linenomath}
{being the parameters that compose the latent field $\beta_{01}= -1.5$, $\beta_{02}=-2$, $\beta_{03}=0$ (the intercepts), and $\beta_{11}=1$, $\beta_{12}=2.3$, $\beta_{13}=-1.9$ (the slopes). Note that covariates are different for each category. This could be particularized for a situation where all of them are the same.}

For simplicity, covariates are simulated from a Uniform distribution on (0,1). 
To posteriorly fit the model, and following the structure of LGMs, Gaussian prior distributions are assigned with precision $10^{-4}$ to all the elements of the Gaussian field. 

\subsection{Data simulation}
This subsection presents an example of how simulation can be conducted using the functions of \pkg{dirinla}. 

First, we simulate the covariates from a \text{Uniform}(0,1):
\begin{Schunk}
{
\begin{Sinput}
R> N <- 100
R> V <- as.data.frame(matrix(runif((3) * N, 0, 1), ncol = 3))
R> names(V) <- paste0('v', 1:3)
\end{Sinput}
}
\end{Schunk}

We then define the formula that we want to fit to keep the values of the different categories in a list. This object will be used to construct the $\ve{A}$ matrix. We use the function \code{formula\_list()} from the package \pkg{dirinla}. 
\begin{Schunk}
\begin{Sinput}
R> formula <- y ~ 1 + v1 | 1 + v2 | 1 + v3
R> names_cat <- formula_list(formula)
\end{Sinput}
\end{Schunk}

The values for the parameters composing the latent field are assigned to conduct the simulation. As we have previously depicted, $\beta_{01}= -1.5$, $\beta_{02}=-2$, $\beta_{03}=0$ are the intercepts, and $\beta_{11}=1$, $\beta_{12}=2.3$, $\beta_{13}=-1.9$  are the slopes: {\azul
\begin{Schunk}
{
\begin{Sinput}
R> x <- c(-1.5, 1, -2, 2.3, 0, -1.9) 
\end{Sinput}
}
\end{Schunk}
}
We call the function \code{data\_stack\_dirich()} of the package \pkg{dirinla} to construct the $\ve{A}$ matrix presented in previous sections. This function uses the \code{inla.stack()} structure of the package \pkg{R-INLA}. As a consequence, the returning object is an \code{inla.stack} object. Observe that the arguments are the response variable \code{y} (in this case it has not been generated yet), the names of the categories \code{covariates}, a matrix with the values of the covariates \code{data}, the number of categories \code{d} and the {number of observations} \code{N}. The sparse matrix $\ve{A}$ is then computed. 
\begin{Schunk}
\begin{Sinput}
R> mus <- exp(x) / sum(exp(x))
R> C <- length(names_cat)
R> A_construct <-
+    data_stack_dirich(y = as.vector(rep(NA, N * C)),
+                      covariates = names_cat,
+                      data       = V,
+                      d          = C,
+                      n          = N)
\end{Sinput}
\end{Schunk}
The next step is to construct the linear predictor as $\ve{\tilde{\eta}} = \ve{A} \ve{x}$ using the parameters fixed in the latent field. Using the exponential transformation it is easy to get the parameters $\ve{\alpha}$ of the Dirichlet distribution:
\begin{Schunk}
\begin{Sinput}
R> eta <- A_construct 
R> alpha <- exp(eta)
\end{Sinput}
\end{Schunk}

The last stage is to generate the response variable using the function \code{rdirichlet()} from \pkg{DirichletReg} \citep{maier2014}. {\azul The output is a matrix with the response variable summing their rows up to one.}
\begin{Schunk}
\begin{Sinput}
R> y <- rdirichlet(N, alpha)
\end{Sinput}
\end{Schunk}

\subsection{Fitting the model}
We now present the functions of \pkg{dirinla} needed to fit Dirichlet regression models, the main one being \code{dirinlareg}. This function carries out all the steps presented in Section \ref{sec:inference}, and its use (for the simulated data from previous subsection) is as follows:
\begin{Schunk}
{
\begin{Sinput}
R> model.inla <- dirinlareg(
+    formula  = y ~ 1 + v1 | 1 + v2 | 1 + v3 ,
+    y        = y,
+    data.cov = V,
+    prec     = 0.0001,
+    verbose  = FALSE)
\end{Sinput}
}
\end{Schunk}
\noindent where we have to specify the model \code{formula}, the response variable $\ve{Y}$ in a matrix format, the \code{data.frame} with the covariates \code{data.cov}, and the precision of the Gaussian prior (\code{prec}) for the latent field $\ve{x}$. If we want to follow the process step by step, we can add the instruction \code{verbose = TRUE}.

Once the model is fitted, we can summarize the posterior distribution of the fixed effects by means of the function \code{summary} applied to the object generated. This object belongs to the \code{dirinlaregmodel} class. Three model selection criteria are also displayed: Deviance Information Criterion \citep[DIC]{spiegelhalter2002}, Watanabe-Akaike information criteria \citep[WAIC]{gelman2014}, and the  mean of the logarithm of the conditional predictive ordinate \citep[LCPO]{gneiting2007}. Lastly, the number of observations and the number of categories are also depicted (Table \ref{table:example1}).
\begin{table}[H]{ {\azul
\begin{center} {
\caption{Output obtained when \code{summary} command is employed in a \code{dirinlaregmodel} object.}
\begin{tabular}{r | rrrrrrr}
  \hline
 & & mean & sd & 0.025quant & 0.5quant & 0.975quant & mode \\ 
  \hline
\multirow{ 2}{*}{y1} & intercept & -1.5542 & 0.2014 & -1.9497 & -1.5542 & -1.1591 & -1.5542 \\ 
  & v1 & 1.0181 & 0.3690 & 0.2936 & 1.0181 & 1.7420 & 1.0181 \\ 
   \hline
   \multirow{ 2}{*}{y2} & intercept & -1.7601 & 0.2329 & -2.2174 & -1.7601 & -1.3032 & -1.7601 \\ 
  & v2 & 1.9847 & 0.3971 & 1.2051 & 1.9847 & 2.7636 & 1.9847 \\ 
  \hline
  \multirow{ 2}{*}{y1} & intercept & -0.0593 & 0.2500 & -0.5501 & -0.0593 & 0.4312 & -0.0593 \\ 
  & v3 & -1.9470 & 0.4158 & -2.7633 & -1.9470 & -1.1314 & -1.9470 \\ 
 \hline
 \multicolumn{7}{l}{DIC = 1821.0247, WAIC = 1827.6268, LCPO = 913.8319}\\
 \multicolumn{7}{l}{Number of observations: 100} \\
 \multicolumn{7}{l}{Number of Categories: 3} \\
 \hline
\end{tabular}
\label{table:example1}}
\end{center}}}
\end{table}
{Using the implemented \code{plot} method, we obtain the marginal posterior distributions of the latent field for the different categories (Figure \ref{fig:posteriors_dirinla_1}). These marginals and a summary of each one are stored in \code{model.inla\$marginals\_fixed} and \code{model.inla\$summary\_fixed} (see \code{example.R} from the supplementary code to see the details of the code). The \code{plot} method also depicts the posterior predictive distribution in the simplex (Figure \ref{fig:prediction_tern_example}). }

\begin{figure}[H]
\begin{center}
	\includegraphics[width = \textwidth]{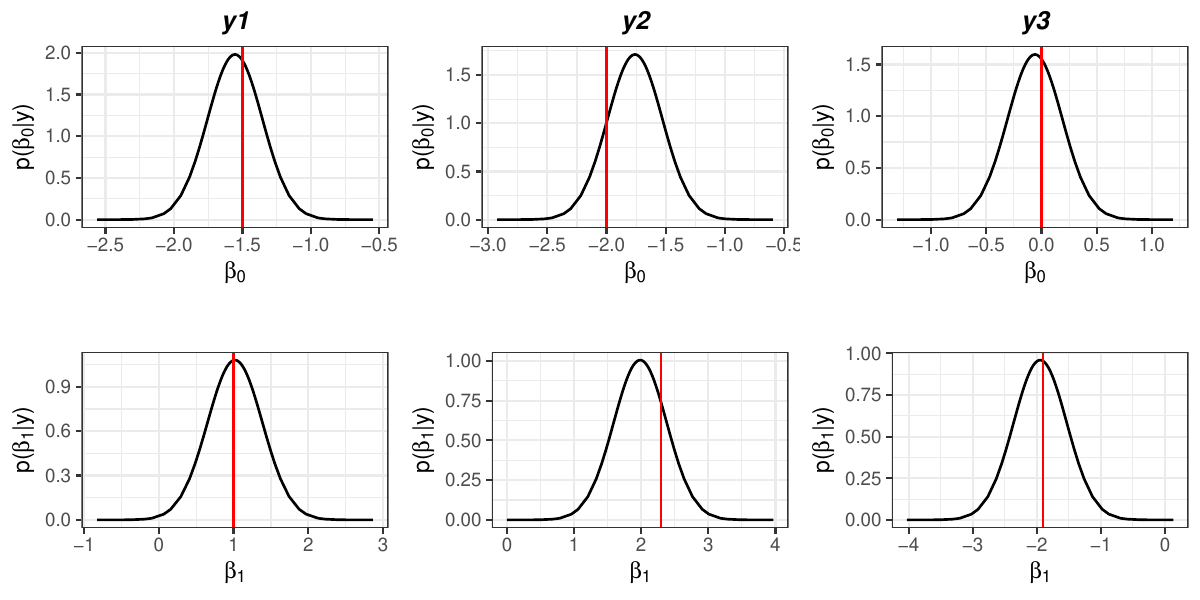}
\caption{Marginal posterior distributions of the latent field for the different categories. Real values are indicated with a red vertical line. The {sample size} is 50.}
\label{fig:posteriors_dirinla_1}
\end{center}
\end{figure}

\begin{figure}[H]
\begin{center}
	\includegraphics[width = 0.7\textwidth]{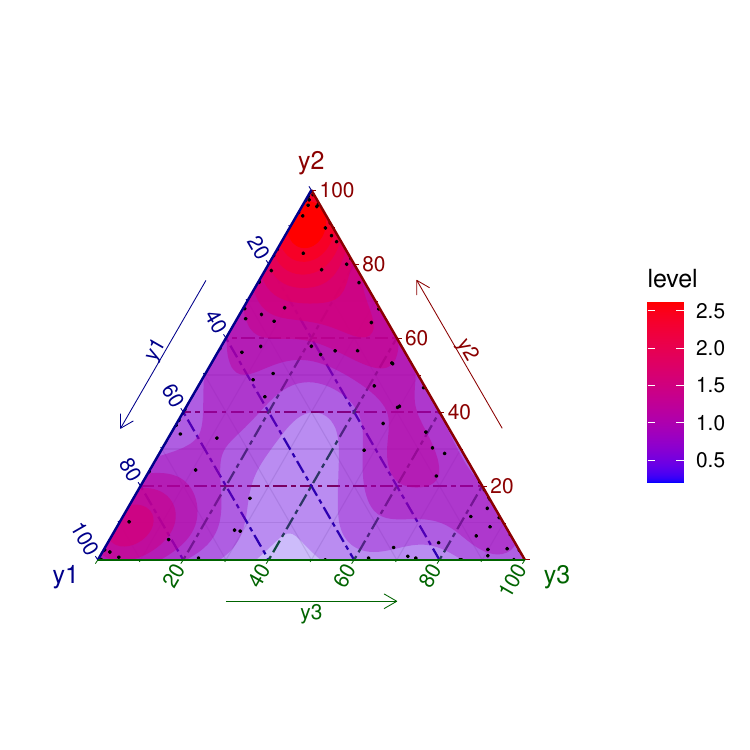}
\caption{{Posterior predictive distribution in the simplex. Points represent the original data.}}
\label{fig:prediction_tern_example}
\end{center}
\end{figure}

Finally, the posterior distribution for the scale parameters of the Dirichlet $\alpha$ can also be computed.
In particular, \code{model.inla\$marginals\_fixed} and \code{model.inla\$summary\_fixed} provide the marginals and a summary for each category. Mean parameters can also be obtained via \code{model.inla\$marginals\_means} or \code{model.inla\$summary\_means}, and similarly, \code{model.inla\$marginals\_precision} or \code{model.inla\$summary\_precision} provides the precision parameters.

\subsection{Prediction}
In most cases, practitioners want to be able to predict the composition of a new observation. The package also provides a function \code{predict} to compute posterior predictive distributions for new individuals. To show how it works, we now present how to predict for a value of {\texttt{v1 = 0.2}, \texttt{v2 = 0.5}, and \texttt{v3 = -0.1}}:
\begin{Schunk}
{
\begin{Sinput}
R> model.prediction <-
+  predict(model.inla,
+          data.pred.cov = data.frame(v1 = 0.2 ,
+                                     v2 = 0.5,
+                                     v3 = -0.1))
\end{Sinput}
}
\end{Schunk}
The resulting object also belongs to the \code{dirinlaregmodel} class. In a similar way as above, the elements \code{summary\_predictive\_alphas} and \code{marginals\_predictive\_alphas} describe the posterior predictive distribution for the scale parameters of the Dirichlet $\ve{\alpha}$, obtained for the new values of the covariates. {\azul In addition, means ($\mu$) and precisions ($\tau$) are available via \code{summary\_predictive\_means},  \code{marginals\_predictive\_means}, \code{summary\_predictive\_precision} and
\code{marginals\_predictive\_precision}}

\section{Simulation studies}\label{sec:simulation}
This section provides a comparison of the performance of the INLA approach for Dirichlet regression models using the \pkg{dirinla} package with the widely used method for Bayesian inference using MCMC algorithms, \pkg{R-JAGS} \citep{plummer2016}. The comparison was performed in four different simulated scenarios, and {\azul consists of comparisons of computational times and method accuracy.}

In all the cases to compare \pkg{dirinla} with \pkg{R-JAGS}, we employed three different methods to make inference: a standard application of the \pkg{R-JAGS} package with a number of iterations enough to {\azul achieve a given effective sample size}; the INLA methodology through the \pkg{dirinla} package; and a ``long'' application of the \pkg{R-JAGS} package (for simplicity called \pkg{long R-JAGS} from now on), in this case with a large amount of iterations to get really good representation of the posterior distributions. All computations were performed on a computer with a processor Intel(R) Core(TM) i5-10500 CPU @ 3.10GHz and 32 Gb RAM memory.

To compare the different ways of inference, we implemented four different strategies. The first one was to plot the approximate marginal posterior distributions of each parameter jointly with the real value. The second one consisted on comparing computational times between the three different methods. The remaining two strategies involved the computation of two different ratios:
\begin{eqnarray}\label{eq:ratio1_ratio2}
	ratio_1-\text{\pkg{method}} & = & (E(\phi_{\text{\pkg{method}}}) - E(\phi_{\text{\pkg{long R-JAGS}}}))/ SD(\phi_{\text{\pkg{long R-JAGS}}})  \,\,, \\
	ratio_2-\text{\pkg{method}} & = & SD(\phi_{\text{\pkg{method}}})/ SD(\phi_{\text{\pkg{long R-JAGS}}})  \,\,,
\end{eqnarray}
where \pkg{method} refers to the method we want to compare with \pkg{long R-JAGS}. It can be either \pkg{dirinla} or \pkg{R-JAGS}. $\phi$ is the parameter of interest.  $E(\cdot)$ represents the computed mean of the marginal posterior distribution, and $SD(\cdot)$ is the computed standard deviation of the posterior distribution. In what follows, we make a brief description of the four different scenarios with the aim of extending the explanation at a later stage.
\begin{enumerate}
	\item \textbf{Simulation 1}: the first scenario comprised the simulation of data coming from a Dirichlet regression with just intercepts in the model, with different data sizes.
	\item \textbf{Simulation 2}: the second scenario is similar to the previous one, but adding some complexity to the model. In particular, by adding a covariate per category, and checking with different data sizes.
	\item \textbf{Simulation 3}: in the third scenario, we show how the method behaves when random effects are added to the model. The simulations and model include random effects levels.
	\item \textbf{Simulation 4}: the objective of this last scenario is to check how our INLA approach for Dirichlet regression behaves when the number of categories increases. We simulated data from a Dirichlet regression with no slopes but increasing the number of categories.
\end{enumerate}

\subsection{Simulation 1} 
Our first setting is based on a Dirichlet regression with four categories and one parameter per category, the intercept, that is:
\begin{linenomath}
\begin{align} \label{eq:dirichlet_example1}
  \vekey{Y}_{\bullet n} & \sim  \text{Dirichlet}(\alpha_{1n}, \ldots, \alpha_{4n}) \,, n = 1, \ldots, N, \nonumber \\
  \log(\alpha_{1n}) & =  \beta_{01}, \nonumber \\
  \log(\alpha_{2n}) & =  \beta_{02}, \nonumber \\
  \log(\alpha_{3n}) & =  \beta_{03},  \\
  \log(\alpha_{4n}) & =  \beta_{04}. \nonumber
\end{align}
\end{linenomath}
Five different datasets of sizes $N= 50, 100$, $500, 1000, 10000$ with this structure were simulated letting $\beta_{0c}, c = 1,\ldots,4$ to be $-2.4$, $1.2$, $-3.1$ and $1.3$, respectively. We used vague prior distributions for the latent field {\azul($\boldsymbol{x} =$ \{$\beta_{0c}, c = 1,\ldots,4$\}). In particular, $p(x_m) \sim$ $\mathcal{N}(0, \tau = 0.0001)$, $m = 1, \ldots, 4$}. As the response values are not close to 0 and 1, no transformation was needed.

As above mentioned, for each simulated dataset, we employed three different methods to make inference: a standard application of the \pkg{R-JAGS} package with {$2000$ iterations, a {\azul burnin} of $200$, a thin number of $5$ and $3$ chains; the INLA methodology through the \pkg{dirinla} package; and \pkg{long R-JAGS}, using $1000000$ iterations with a {\azul burnin} of $100000$, a thin number of $5$ and $3$ chains. To compare methodologies, the strategies presented at the begining of the section were used.

As seen in Figure \ref{fig:posteriors1_100} (the rest of the plots are shown in the Simulation 1 of the supplementary material in \url{https://jmartinez-minaya.github.io/supplementary/CODA/tests.html}), the posterior densities have similar shape, with the new method tending to agree more with the \pkg{long R-JAGS} result than the more variable short run \pkg{R-JAGS} does, illustrating how our method reduces estimator variability, at the potential cost of a generally small bias.
\begin{figure}[H]{
\begin{center}
\begin{tabular}{c}
	\includegraphics[width = \textwidth]{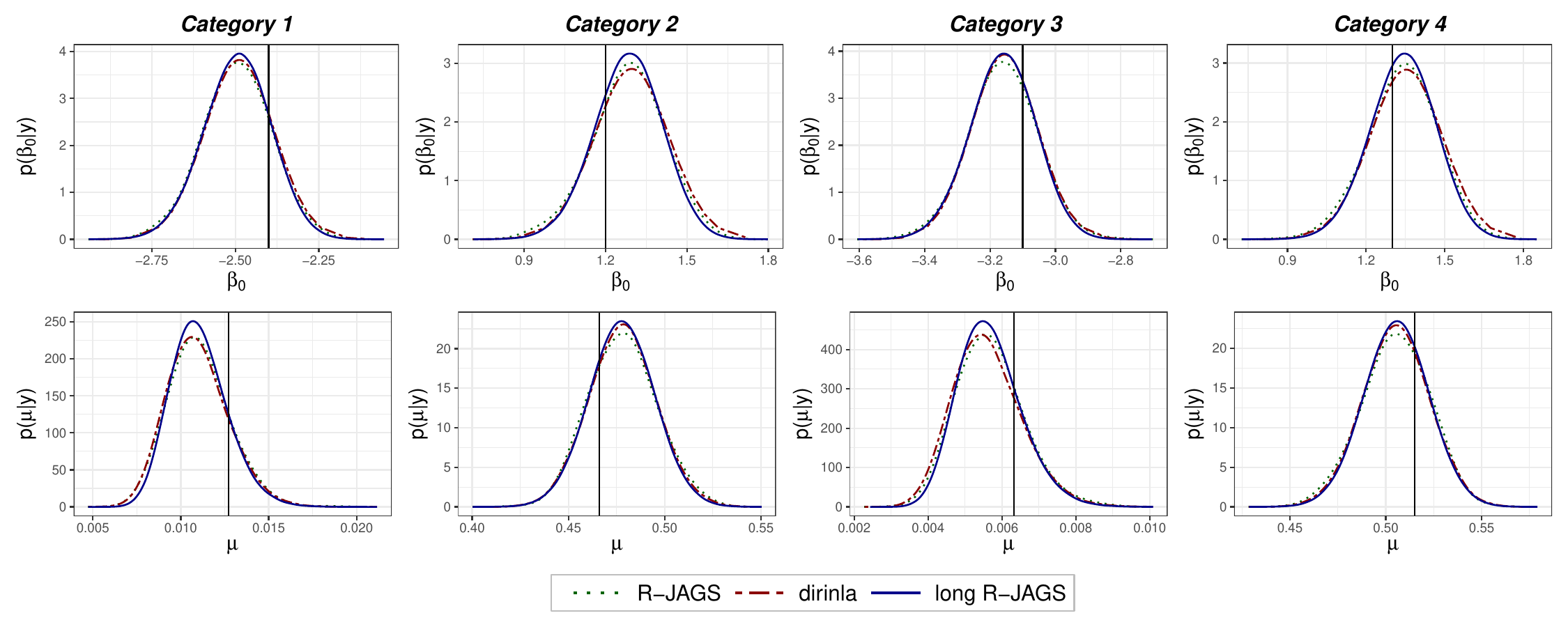} \\
\end{tabular}
\caption{Simulation 1: marginal posterior distributions of the latent field for the different categories, and using different methodologies: \pkg{R-JAGS}, \pkg{dirinla} and \pkg{long R-JAGS}, when the {sample size} is 100. {\azul Black vertical lines represent real values.}}
\label{fig:posteriors1_100}
\end{center}}
\end{figure}
{\azul With respect the computational effort needed to get those results, Figure \ref{fig:times_sim1} displays that, \pkg{dirinla} methodology has a faster computational speed for given data sizes,
for the given \pkg{R-JAGS} chain lengths.}
\begin{figure}[H]{
\begin{center}{\azul
	\includegraphics[width = 0.8\textwidth]{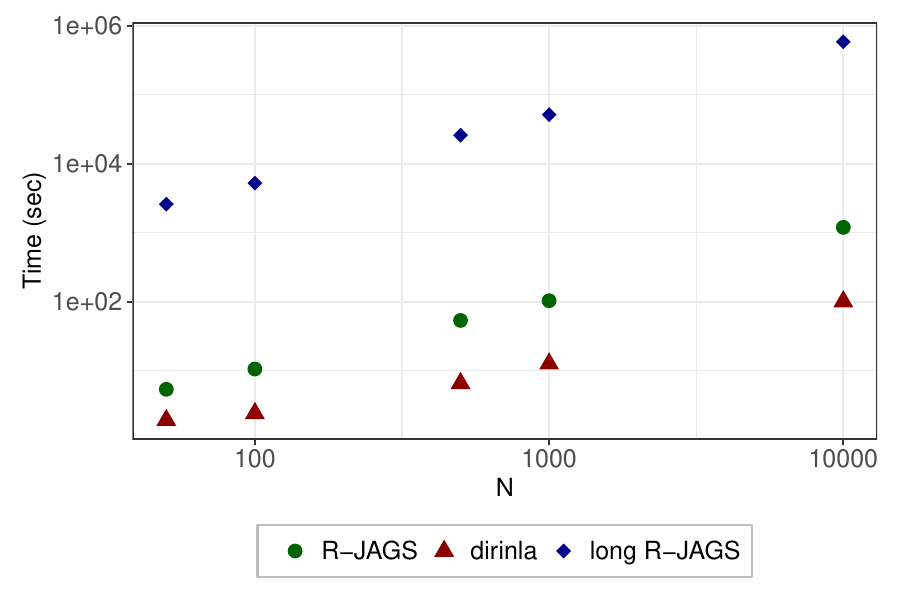} \\
\caption{Simulation 1: computational time in seconds for the different simulated data with N = 50, 100, 500, 1000 and 10000, and with the different methodologies: \pkg{R-JAGS}, \pkg{dirinla} and \pkg{long R-JAGS}.}
\label{fig:times_sim1}
}\end{center}}
\end{figure}
{\azul In Figure \ref{fig:ratios_sim1}}, we see that $ratio_1$-\pkg{dirinla} is so closed to 0 as happen with $ratio_1$-\pkg{R-JAGS}, and $ratio_2$-\pkg{dirinla} is always close to 1, similar to $ratio_2$-\pkg{R-JAGS}, meaning that using \pkg{dirinla} we obtain as good approximations as with \pkg{R-JAGS} reducing considerably the computational cost.
\begin{figure}[H]{
\begin{center}{\azul
	\includegraphics[width = \textwidth]{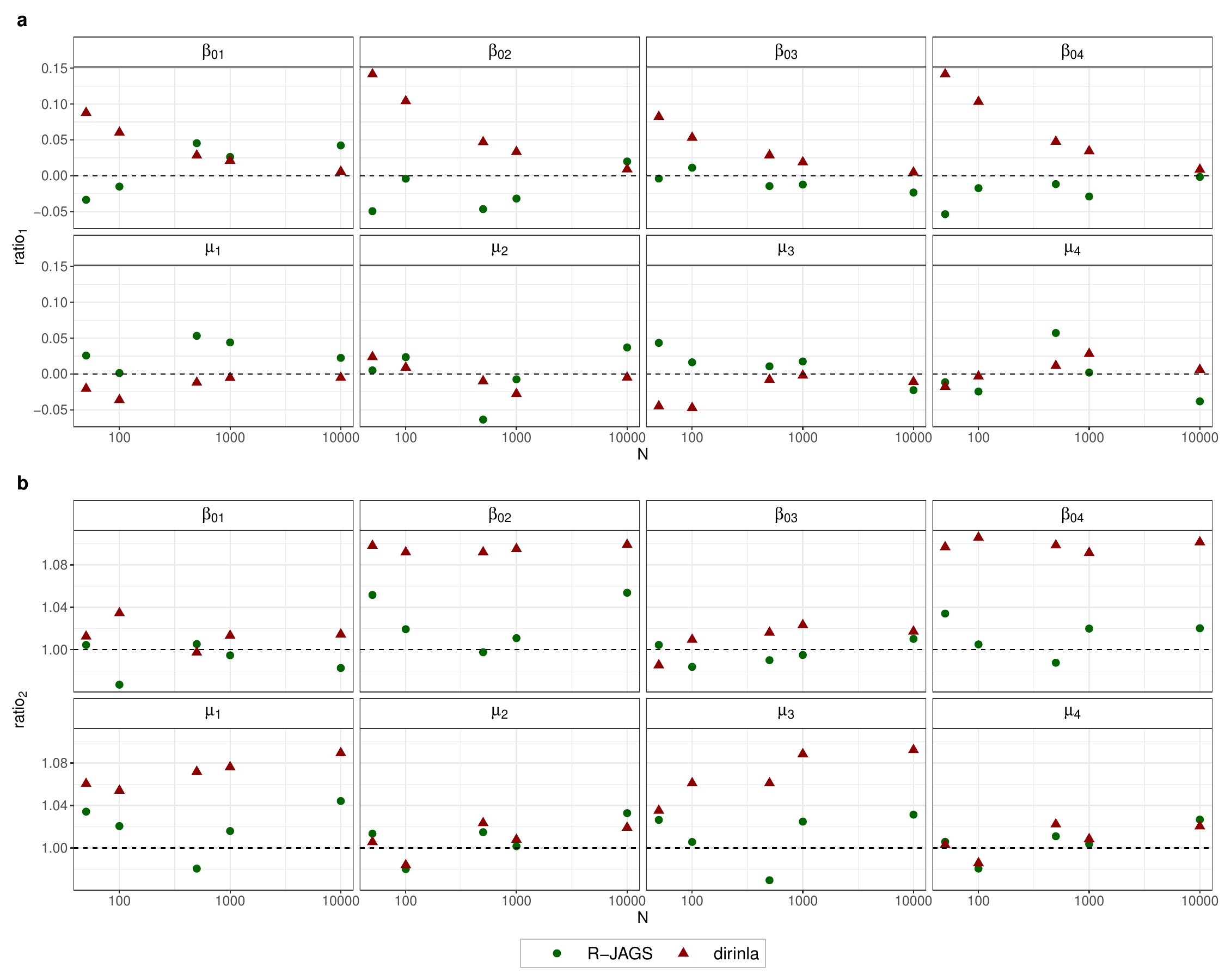} \\
\caption{Simulation 1: comparing accuracy between \pkg{dirinla} and \pkg{R-JAGS} computing two different measures: $ratio_1$ (a) and $ratio_2$ (b) for $\beta_{0c}$ and $\mu_{c}$, $c = 1, \ldots, 4$.}
\label{fig:ratios_sim1}
}\end{center}}
\end{figure}

\subsection{Simulation 2}
The second setting is based on a Dirichlet regression with a different covariate per category:
\begin{linenomath}
\begin{align}
  \vekey{Y}_{\bullet n} & \sim  \text{Dirichlet}(\alpha_{1n}, \ldots, \alpha_{4n}) \,, n = 1, \ldots, N, \nonumber \\
  \log(\alpha_{1n}) & =  \beta_{01} + \beta_{11} v_{1n}, \nonumber \\
  \log(\alpha_{2n}) & =  \beta_{02} + \beta_{12} v_{2n}, \nonumber \\
  \log(\alpha_{3n}) & =  \beta_{03} + \beta_{13} v_{3n},  \\
  \log(\alpha_{4n}) & =  \beta_{04} + \beta_{14} v_{4n}. \nonumber
\end{align}
\end{linenomath}
{Again, we simulated five different datasets of sizes $N= 50, 100$, $500, 1000, 10000$. We set values for $\beta_{0c}$ and $\beta_{1c}$, $c = 1, \ldots, 4$ to $-1.5, 1, -3, 1.5, 2, -3 , -1, 5$ respectively, and we simulated covariates from a Uniform distribution with mean in the interval $(0,1)$. We assigned vague prior distributions for the latent field {\azul($\boldsymbol{x}=$\{$\beta_{0c}, \beta_{1c}, c = 1,\ldots,4$\}). Particularly, $p(x_m) \sim$ $\mathcal{N}(0, \tau = 0.0001)$, $m = 1,\ldots, 8$}. As the data generated did not present zeros and ones, we did not use any transformation.}

As in the previous simulation, we employed the same three different inference methods with the same configurations and the four strategies used in the previous simulation. In Figure \ref{fig:posteriors_100} (see Simulation 2 in \url{https://jmartinez-minaya.github.io/supplementary/CODA/tests.html} for further plots), the posterior distributions are similar in both $\beta_{0c}$ and $\beta_{1c}$, $c = 1, \ldots, 4$.
\begin{figure}[H]
\begin{center}
\begin{tabular}{c}
	\includegraphics[width = \textwidth]{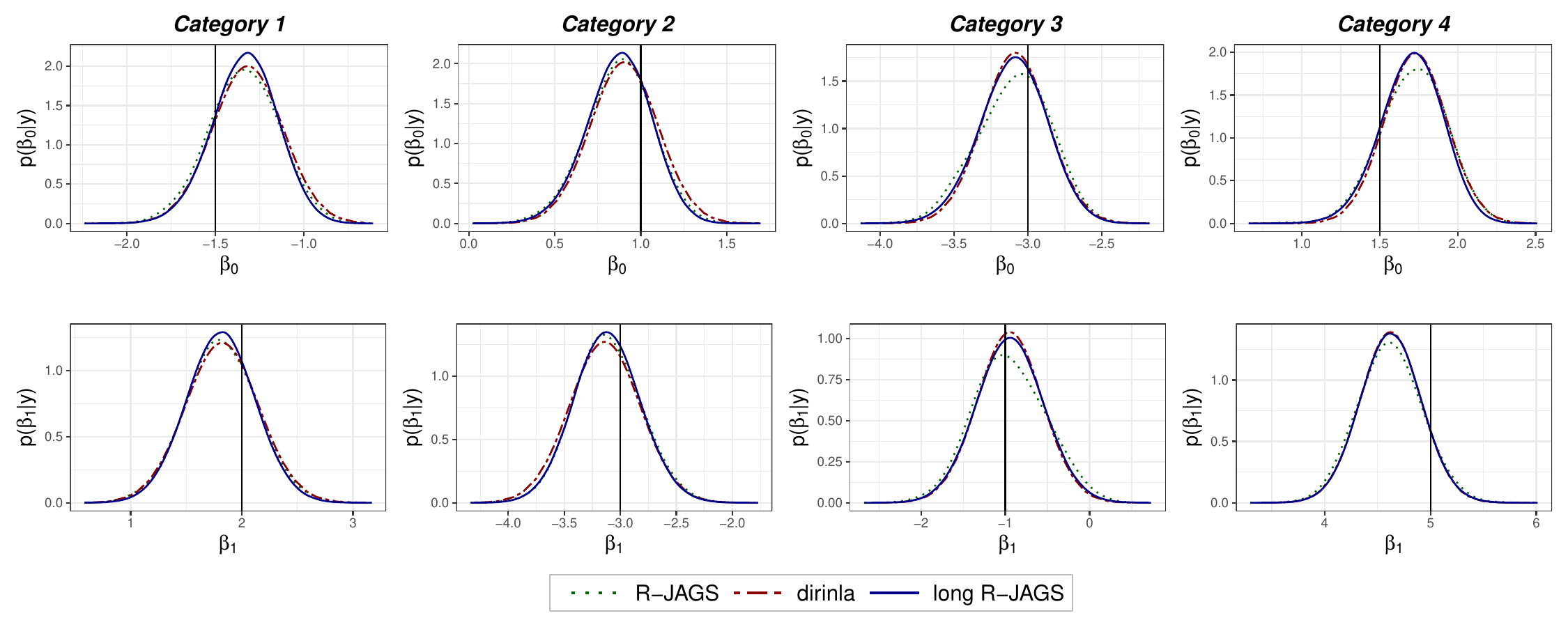} \\
\end{tabular}
\caption{Simulation 2: marginal posterior distributions of the latent field for the different categories, and using different methodologies \pkg{R-JAGS}, \pkg{dirinla} and \pkg{long R-JAGS}, when the {sample size} is 100. {\azul Black vertical lines represent real values.}}
\label{fig:posteriors_100}
\end{center}
\end{figure}
{\azul Figure \ref{fig:times_sim2} depicts that \pkg{dirinla} has a faster computational speed for given data sizes and for given \pkg{R-JAGS} chain lengths. In Figure \ref{fig:ratios_sim2}}, we show that $ratio_1$-\pkg{dirinla} and $ratio_1$-\pkg{R-JAGS} are so close to 0 in all cases, and $ratio_2$-\pkg{dirinla} and $ratio_2$-\pkg{R-JAGS} are always close to 1. Again, we conclude that using \pkg{dirinla} we get similar approximations to \pkg{R-JAGS}, while substantially reducing the computational cost.
\begin{figure}[H]{
\begin{center}{\azul
	\includegraphics[width = 0.8\textwidth]{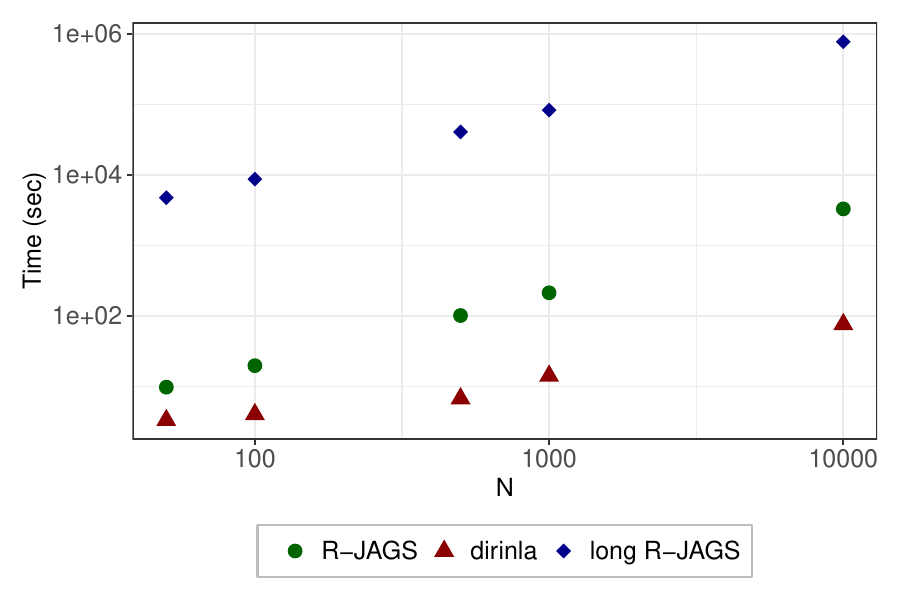} \\
\caption{Simulation 2: computational time in seconds for the different simulated data with N = 50, 100, 500, 1000 and 10000, and with the different methodologies: \pkg{R-JAGS}, \pkg{dirinla} and \pkg{long R-JAGS}.}
\label{fig:times_sim2}
}\end{center}}
\end{figure}
\begin{figure}[H]{
\begin{center}{\azul
	\includegraphics[width = \textwidth]{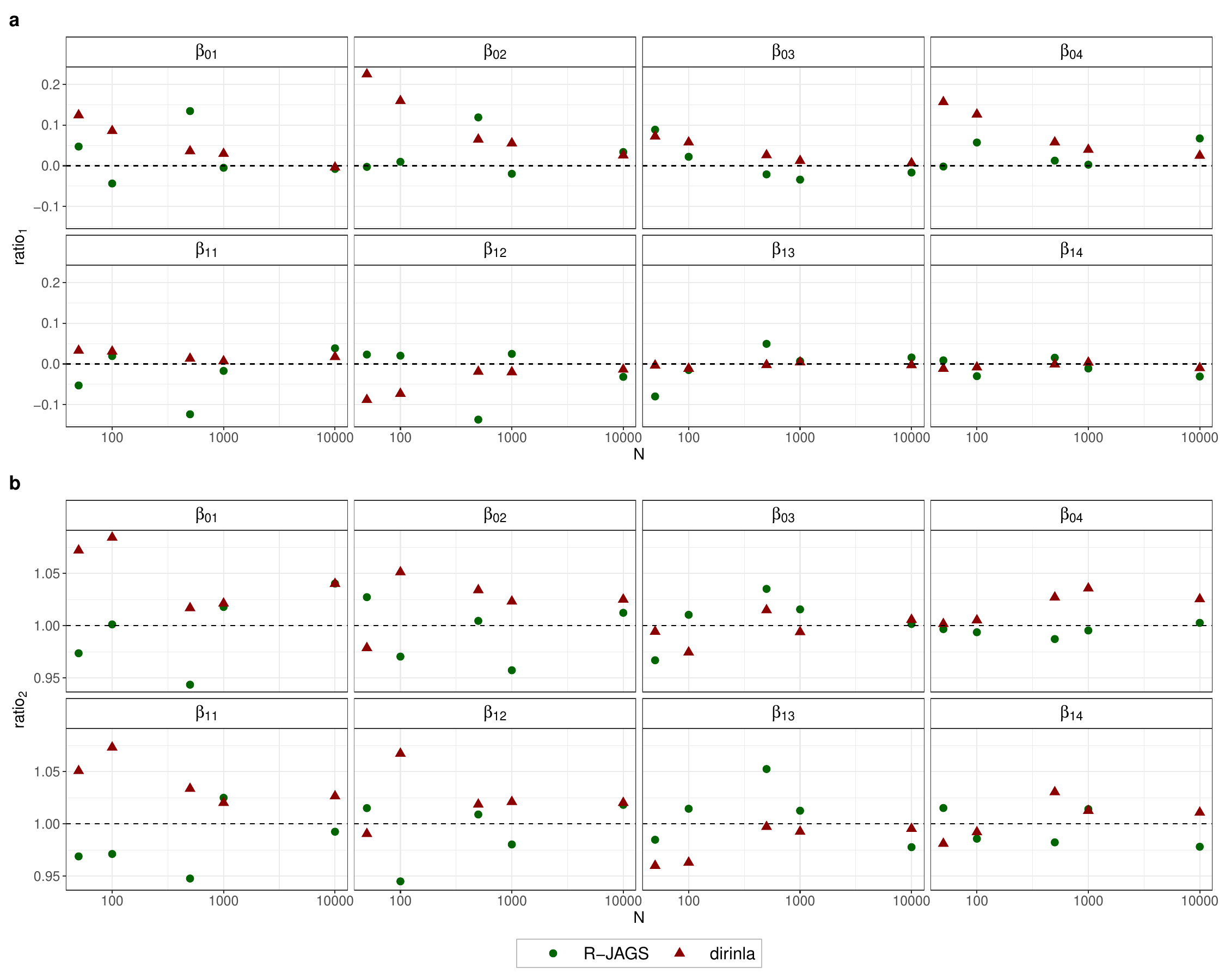} \\
\caption{Simulation 2: comparing accuracy between \pkg{dirinla} and \pkg{R-JAGS} computing two different measures: $ratio_1$ (a) and $ratio_2$ (b) for $\beta_{0c}$ and $\beta_{1c}$, $c = 1, \ldots, 4$.}
\label{fig:ratios_sim2}
}\end{center}}
\end{figure}

\subsection{Simulation 3}
The third setting is based on a Dirichlet regression with a different covariate per category without intercept and adding two shared independent random effects.
\begin{linenomath}
\begin{align}
  \vekey{Y}_{\bullet n} & \sim  \text{Dirichlet}(\alpha_{1n}, \ldots, \alpha_{4n}) \,, n = 1, \ldots, N, \nonumber \\
  \log(\alpha_{1n}) & =   \beta_{11} v_{1n} + \omega_{1i_n}, \nonumber \\
  \log(\alpha_{2n}) & =   \beta_{12} v_{2n} + \omega_{1i_n}, \nonumber \\
  \log(\alpha_{3n}) & =   \beta_{13} v_{3n} + \omega_{2i_n},  \\
  \log(\alpha_{4n}) & =   \beta_{14} v_{4n} + \omega_{2i_n}. \nonumber
\end{align}
\end{linenomath}
We simulated four different datasets of sizes $N= 50, 100$, $500, 1000$. We set values for $\beta_{1c}$ for $c = 1, \ldots, 4$ to $-1.5, 2, 1, -3 $ respectively, and we simulated covariates from a Uniform distribution on the interval $(-1,1)$. Random effects $\ve{\omega}_1$ and $\ve{\omega}_2$ were simulated from Gaussian distributions with mean 0 and {\azul standard deviations $\sigma_1 = 1/2$ (precision $\tau_1 = 4$) and $\sigma_2 = 1/3$ (precision $\tau_2 = 9$)} varying the levels of the factor ($I$), in particular, {\azul they were set to $I = 2, 5, 10, 25$}. The $i_n$ sub-index assigns each individual $n$ to a level of the factor.

As we are in the context of Bayesian LGMs, we established Gaussian prior distributions for the latent field, in this case, formed by the parameters corresponding to the fixed effects and the random effects. In particular, we assigned Gaussian prior distributions with mean 0 and precision $0.0001$ to \{$\beta_{1c}, c = 1,\ldots,4$\}, and Gaussian priors distribution with mean $0$ and precisions $\tau_1$ and $\tau_2$ for the two shared random effects $\ve{\omega}_{1}$ and $\ve{\omega}_{2}$.
Two types of priors for the $\tau_1$ and $\tau_2$ parameters were employed. Half-Gaussian priors with location $0$ and precision parameter $1$ were used for \pkg{R-JAGS}, \pkg{long R-JAGS} and \pkg{dirinla}. For \pkg{dirinla}, and additional model with a PC-prior(1, 0.01) \citep{simpson_penalising_2017} was also used. The generated data did not contain zeros and ones, so we did not use any transformation. As these models have two hyperparameters, we increased the number of iterations of the \pkg{R-JAGS} method to $20000$ and {\azul burnin} to $2000$ {\azul to achieve a given effective sample size} of the MCMC method.

In Figures \ref{fig:posteriors_100_sim3_slopes} and \ref{fig:posteriors_100_sim3_sigma} (see Simulation 3 in the supplementary material, \url{https://jmartinez-minaya.github.io/supplementary/CODA/tests.html} for the complete simulation summaries), we display the posterior distribution of the parameters and the hyperparameters with 2 levels in the factors obtained with the different methods and the different priors. In the case of the precision parameters, we also depict the priors used for them: PC-prior (pc) and Half Gaussian (hn). We observe that the shapes of the posterior densities are similar in all the cases.
\begin{figure}[H]
\begin{center}
\begin{tabular}{c}
\includegraphics[width = \textwidth]{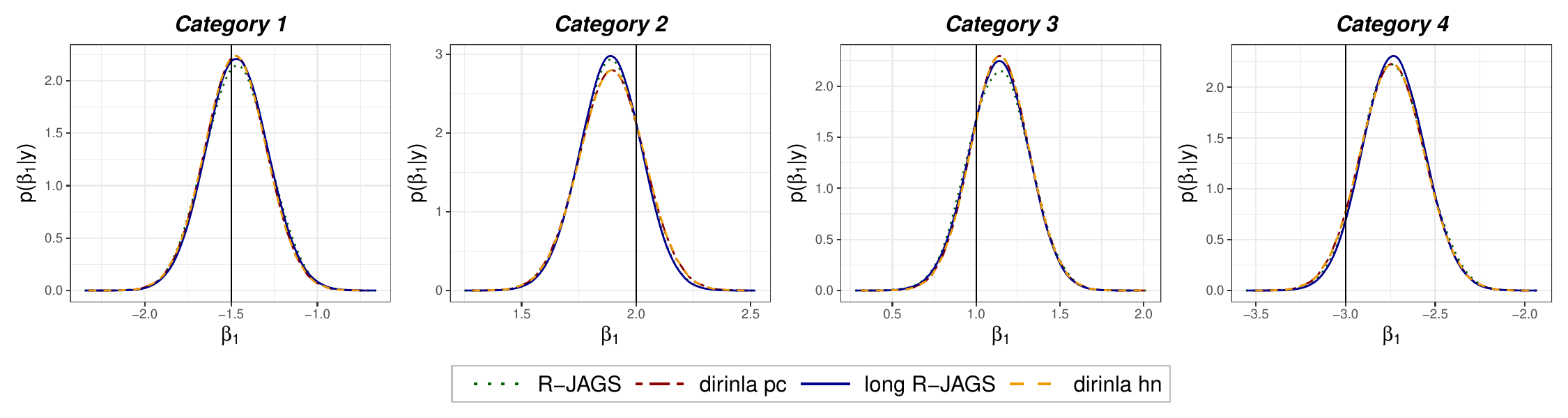} \\
\end{tabular}
\caption{Simulation 3: marginal posterior distributions of the parameters corresponding to the fixed effects using different methodologies \pkg{R-JAGS}, \pkg{dirinla} with different priors and \pkg{long R-JAGS}, when the sample size is 100 and the levels of the factor ($I$) are 2. {\azul Black vertical lines represent real values.}}
\label{fig:posteriors_100_sim3_slopes}
\end{center}
\end{figure}
\begin{figure}[H]
\begin{center}
\begin{tabular}{c}
\includegraphics[width = \textwidth]{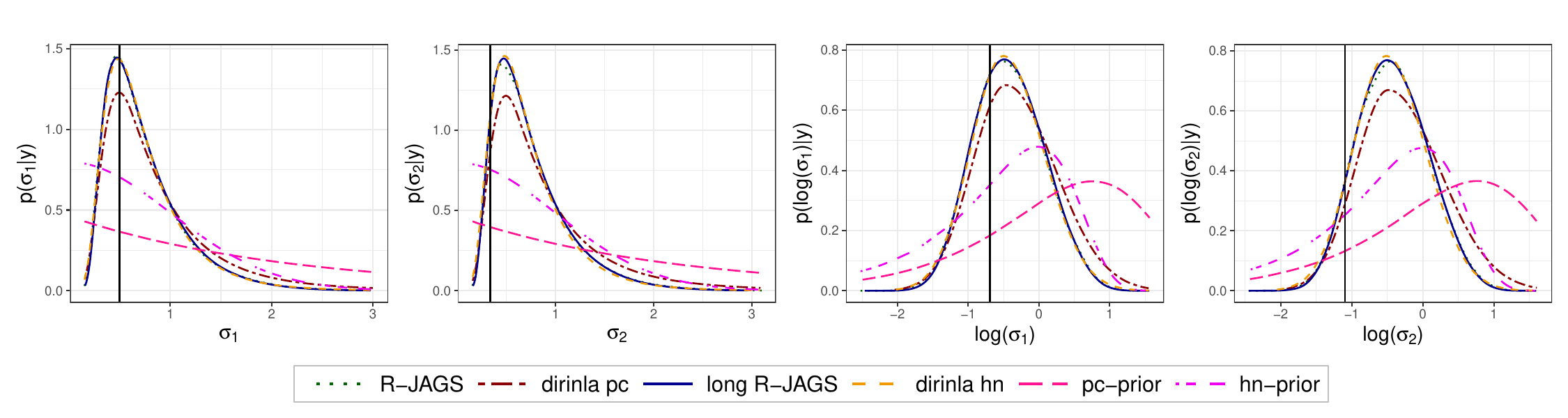} \\
\end{tabular}
\caption{Simulation 3: marginal posterior distributions of the hyperparameters with their logarithmic transformations using different methodologies \pkg{R-JAGS}, \pkg{dirinla} and \pkg{long R-JAGS}, when the sample size is 100 and the levels of the factor ($I$) are 2. Priors are also depicted in the plot. {\azul Black vertical lines represent real values.}}
\label{fig:posteriors_100_sim3_sigma}
\end{center}
\end{figure}
 {\azul Figure \ref{fig:times_simr}} shows that \pkg{dirinla} provides higher computational efficiency. And, in view of Figure \ref{fig:ratios_simr}, $ratio_1$-\pkg{dirinla} and $ratio_1$-\pkg{R-JAGS} are both closed to 0. Similar is the behaviour of $ratio_2$, in both cases, it is closed to 1, proving that when we include random effects in the model, we can also get similar approximations by decreasing the computational cost.
\begin{figure}[H]{
\begin{center}{\azul
	\includegraphics[width = 0.8\textwidth]{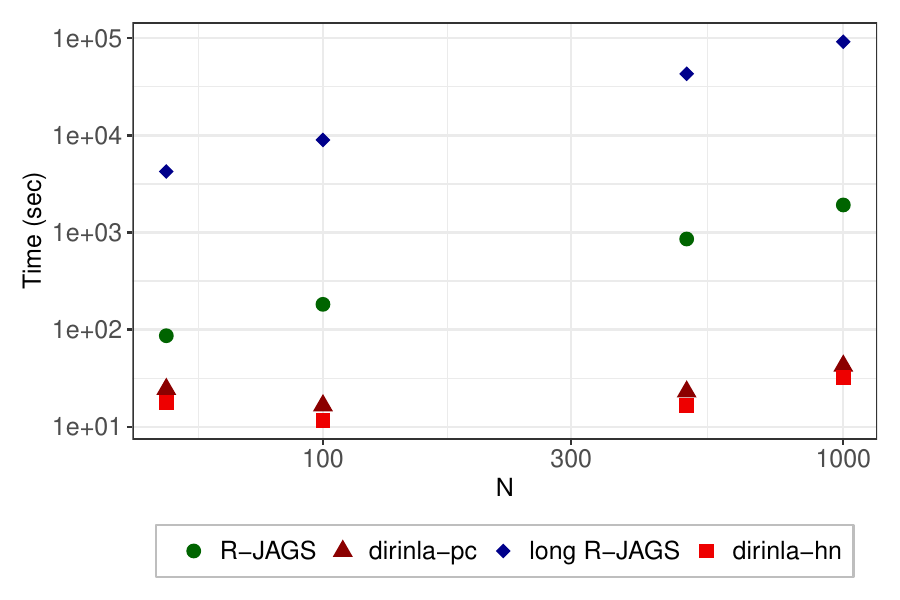} \\
\caption{Simulation 3: computational time in seconds for the different simulated data with N = 50, 100, 500 and 1000, and with the different methodologies: \pkg{R-JAGS}, \pkg{dirinla} and \pkg{long R-JAGS} when the levels of the factor ($I$) are 2.}
\label{fig:times_simr}
}\end{center}}
\end{figure}
\begin{figure}[H]{
\begin{center}{\azul
	\includegraphics[width = \textwidth]{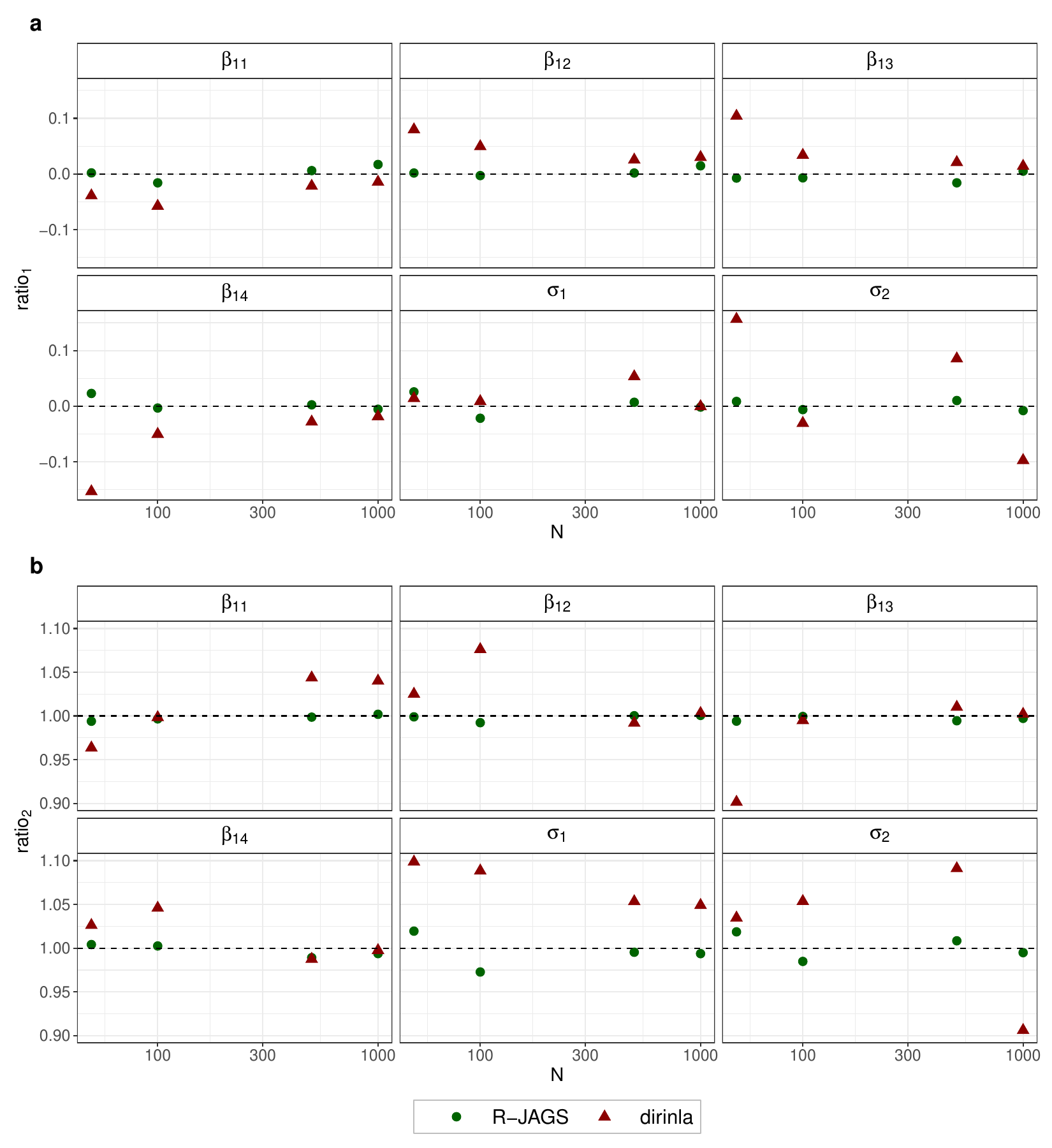} \\
\caption{Simulation 3: comparing accuracy between \pkg{dirinla} and \pkg{R-JAGS} when the prior used is Half Gaussian computing two different measures: $ratio_1$ (a) and $ratio_2$ (b) for $\beta_{1c}$, $c = 1, \ldots, 4$, and $\sigma_1$ and $\sigma_2$, when the levels of the factor ($I$) are 2.}
\label{fig:ratios_simr}
}\end{center}}
\end{figure}
If we look at the rest of the simulations conducted (See Simulation 3 in the supplementary material, \url{https://jmartinez-minaya.github.io/supplementary/CODA/tests.html}), we observe that large $N$ and small $J$ shows the influence of the different priors. And large $J$ display the difficulties of \pkg{dirinla} in approaching the posterior of the hyperparameters.

\subsection{Simulation 4}
This last simulation is based on a Dirichlet regression with just one parameter per category and without covariates as in Simulation 1. The idea is that for a fixed data size, we increase the number of categories and then simulate from all those models and compare inferential methods. In particular, {\azul we used $N = 100$ and $C = 5, 10, 15, 20$ and $30$}. We also employed \pkg{R-JAGS}, \pkg{dirinla} with the same configuration as in simulations 1 and 2. However, in the case of \pkg{long R-JAGS}, iterations were reduced to 100000 and {\azul burnin} to 10000. 

After comparing computational times needed {\azul(Figure \ref{fig:times_sim4})}, we also plot (see Figure \ref{fig:boxplot_dimension}) the measures $ratio_1$ and $ratio_2$ computed for the Dirichlet means $\mu_c$, $c = 1,\ldots, C$ to see how accurate is our method in comparison with \pkg{R-JAGS}. Once more, \pkg{dirinla} shows a higher computational speed than \pkg{R-JAGS} and a similar accuracy.
\begin{figure}[H]{
\begin{center}{\azul
	\includegraphics[width = 0.8\textwidth]{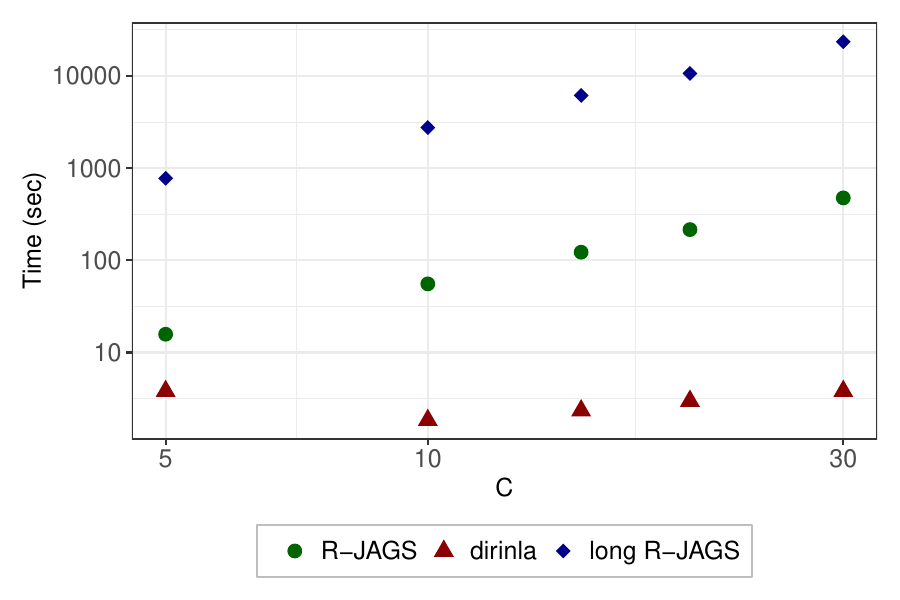} \\
\caption{Simulation 4: computational time in seconds for the different simulated data with $C = 5, 10, 15, 20$ and $30$, $N = 100$, and with the different methodologies: \pkg{R-JAGS}, \pkg{dirinla} and \pkg{long R-JAGS}.}
\label{fig:times_sim4}
}\end{center}}
\end{figure}
\begin{figure}[H]
\begin{center}
\begin{tabular}{c}
	\includegraphics[width = 0.9\textwidth]{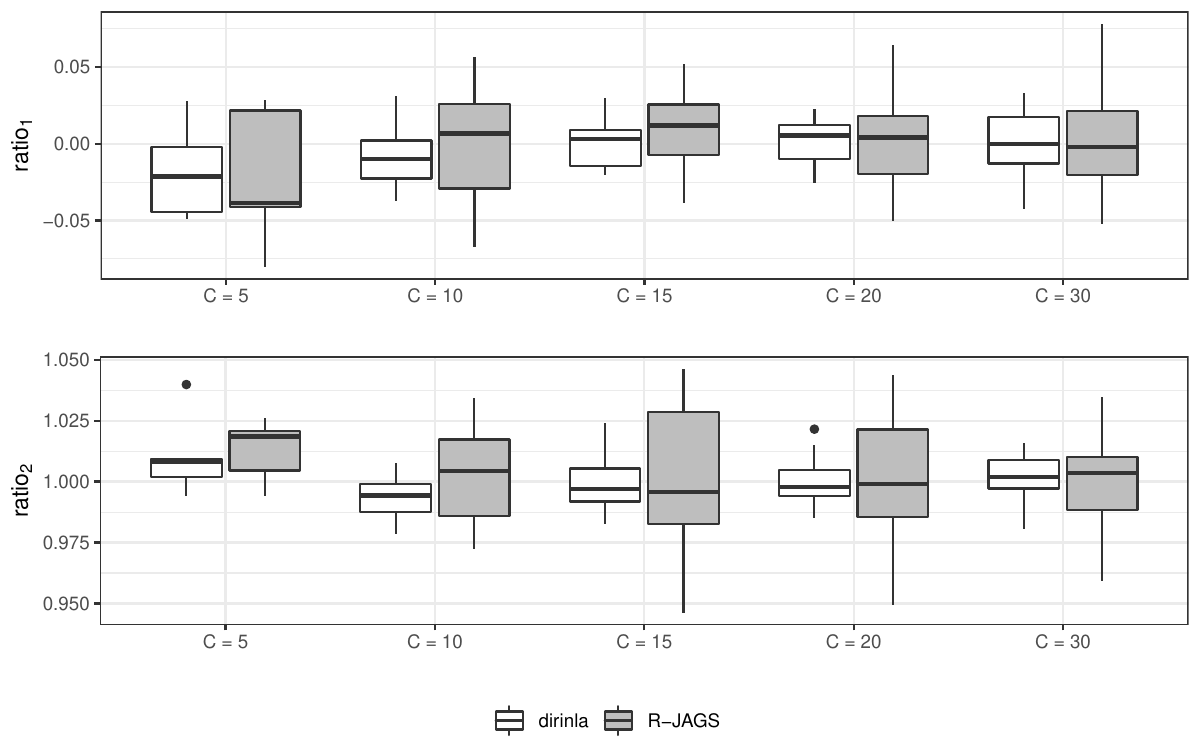} \\
\end{tabular}
\caption{Simulation 4: Boxplot comparing the $ratio_1$ and $ratio_2$ for \pkg{dirinla} and \pkg{R-JAGS} computed for the Dirichlet $\mu_c$, $c = 1,\ldots, C$. In this case, just intercepts were added to the model. The sample size was 100.}
\label{fig:boxplot_dimension}
\end{center}
\end{figure}

{\azul 
\section{Real example: \textit{Arabidopsis thaliana}} \label{sec:real_example}
After validating the use of the package and the approximation in simulated examples, this Section shows an application of the INLA approach for Dirichlet regression models in a real setting. In particular, we worked with a collection of 301 accessions of the annual plant \textit{Arabidopsis thaliana} in the Iberian Peninsula using the four genetic  clusters (gc) infered in \cite{martinez-minaya2019}: gc1, gc2, gc3 and gc4; categories whose values are summing up to one. Here, we depict how these four gcs can be modeled using Dirichlet regression in terms of two covariates: the annual mean temperature (BIO1) and the annual precipitation (BIO12) \citep{martinez-minaya2019}. The aim is explain the distribution of each gc using those two climatic covariates. The complete dataset was downloaded from the repository \url{https://zenodo.org/record/2552025#.YtbLgHZByUl}.

The Dirichlet regression model that relates the proportion of each gc, i.e., the multivariate response $\ve{Y}_{\bullet n}$, to the covariates is
\begin{linenomath}
\begin{align}
  \vekey{Y}_{\bullet n} & \sim  \text{Dirichlet}(\alpha_{1n}, \ldots, \alpha_{4n}) \,, n = 1, \ldots, 301, \nonumber \\
  \log(\alpha_{1n}) & =  \beta_{01} + \beta_{11} BIO1_{n} + \beta_{21} BIO12_{n}, \nonumber \\
  \log(\alpha_{2n}) & =  \beta_{02} + \beta_{12} BIO1_{n} + \beta_{22} BIO12_{n}, \nonumber \\
  \log(\alpha_{3n}) & =  \beta_{03} + \beta_{13} BIO1_{n} + \beta_{23} BIO12_{n},  \\
  \log(\alpha_{4n}) & =  \beta_{04} + \beta_{14} BIO1_{n} + \beta_{24} BIO12_{n}\,. \nonumber
\end{align}
\end{linenomath}
Remark that due to the different scale of the covariates, we have scaled both to have mean 0 and standard deviation 1. We assigned vague prior distributions for the latent field, in particular $p(x_m) \sim$ $\mathcal{N}(0, \tau = 0.01)$, $m = 1, \ldots, 12$. As the data did not include zeros and ones, we did not do any transformation}. In a similar approach as in the previous Section, the number of iterations used in \pkg{R-JAGS} were $20000$ with a {\azul burnin} of $2000$, a thinning of $5$ and $3$ chains, while in the case of the \pkg{long R-JAGS} we used $1000000$ of iterations with a {\azul burnin} of $100000$, a thinning of $5$ and $3$ chains.

{\azul Figure \ref{fig:posteriors_real} displays the marginal posterior distributions for $\beta_{0c}$, $\beta_{1c}$ and $\beta_{2c}$, $c=1, \ldots, 4$. In most cases distributions obtained with \pkg{R-JAGS} perfectly match with those obtained using \pkg{dirinla}. \pkg{R-JAGS} took $405.152$ seconds, \pkg{dirinla} $8.517$ seconds and \pkg{long R-JAGS} $9886.677$.}
\begin{figure}[H] {
\begin{center}
	\includegraphics[width = \textwidth]{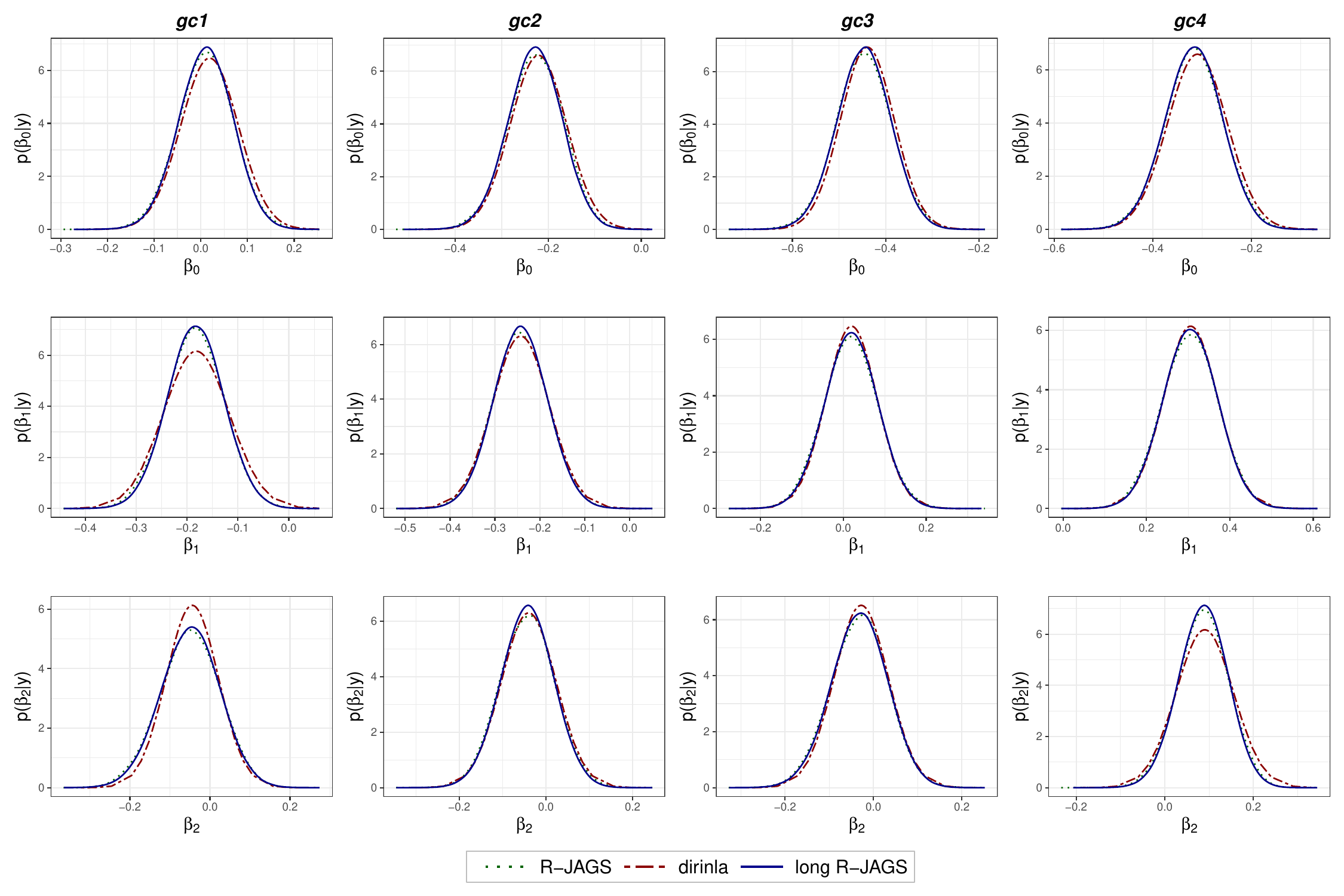}
\caption{Marginal posterior distributions of the parameters corresponding to the fixed effects using three different methodologies: \pkg{R-JAGS}, \pkg{dirinla} and \pkg{long R-JAGS}}
\label{fig:posteriors_real}
\end{center}}
\end{figure}
{
As in the previous examples, we also computed the measures $ratio_1$ and $ratio_2$ for \pkg{dirinla} and \pkg{R-JAGS}. Results are depicted in Table \ref{table:comp_ratios_real} and, we can observe that approaches done for \pkg{dirinla} seems to have a good behaviour in comparison with \pkg{R-JAGS} not only in accuracy but also in computational cost.
\begin{table}[H]{{\azul
\begin{center} {
\caption{Comparing accuracy between \pkg{dirinla} and \pkg{R-JAGS} computing $ratio_1$ and $ratio_2$ for $\beta_{0c}$, $\beta_{1c}$ and $\beta_{2c}$, $c = 1, \ldots, 4$.}
\begin{tabular}{r  rr  rr}
  \hline
  & \multicolumn{2}{c}{$ratio_1$} & \multicolumn{2}{c}{$ratio_2$} \\
 & \pkg{dirinla} & \pkg{R-JAGS} & \pkg{dirinla} & \pkg{R-JAGS} \\ 
   \hline
  $\beta_{01}$ & 0.2255 & -0.0020 & 1.0594 & 1.0026 \\ 
  $\beta_{02}$ & 0.2279 & 0.0237 & 1.0344 & 1.0013 \\ 
  $\beta_{03}$ & 0.2274 & -0.0071 & 0.9957 & 1.0042 \\ 
  $\beta_{04}$ & 0.2197 & 0.0075 & 1.0575 & 1.0003 \\
  \hline 
  $\beta_{11}$ & -0.0056 & -0.0219 & 1.0886 & 0.9967 \\ 
  $\beta_{12}$ & -0.0116 & -0.0180 & 1.0322 & 0.9977 \\ 
  $\beta_{13}$ & -0.0108 & -0.0099 & 0.9549 & 1.0023 \\ 
  $\beta_{14}$ & 0.0011 & -0.0097 & 1.0586 & 1.0011 \\ 
  \hline
  $\beta_{21}$ & 0.0509 & -0.0115 & 0.9318 & 0.9945 \\ 
  $\beta_{22}$ & 0.0560 & -0.0006 & 0.9499 & 1.0097 \\ 
  $\beta_{23}$ & 0.0836 & -0.0137 & 0.9996 & 0.9890 \\ 
  $\beta_{24}$ & 0.0612 & -0.0042 & 1.1022 & 1.0110 \\ 
  \hline
\end{tabular}
\label{table:comp_ratios_real}}
\end{center}}}
\end{table}

\section{Concluding remarks and future work} \label{sec:conclusions}
In this paper the INLA methodology is extended to fit models with Dirichlet response. In particular, we present both the calculations and a package to make inference and prediction for Dirichlet regression. The main idea underneath the proposed method is to approximate the multivariate likelihoods with univariate ones that can be fitted by \pkg{R-INLA}, in particular, Gaussian likelihoods. This idea is similar to the one proposed for modeling Multinomial likelihood in \pkg{R-INLA}, where using the Poisson trick \citep{baker1994} to reparametrize the model we just need to fit independent Poisson observations. \cite{simpson2016} use a similar strategy, constructing a Poisson approximation to the true log-Gaussian Cox process likelihood and enabling to make inference on a regular lattice over the observation window by counting the number of points in each cell. This technique has been already implemented in the \proglang{R} package \pkg{inlabru} \citep{inlabru2018,bachl2019}.

It is widely known that there exist a close connection between the Bayesian integrated likelihood and an adjusted profile likelihood \citep{lee2018}. In particular, the integrated likelihood is approximately the adjusted profile likelihood in the case of orthogonal parameters \citep{cox1987}, we just need to use the Laplace integral approximation to proof it. For the case of Dirichlet response that we deal with in this paper, we propose a quadratic approximation of the likelihood,  allowing us to measure the effect of the likelihood in the posterior. However, for the case of the adjusted profile likelihood, we have not found a similar approximation for the Dirichlet regression. Maybe, the method {\azul we have developed} can help to do approximations using the adjusted profile likelihood of Dirichlet response.

Finally, all examples presented here are focused on models that include fixed and i.i.d.\ random effects. As we are converting the multivariate observations to conditionally independent Gaussian observations that only depend on the linear predictor, and not directly on the individual latent variables, it should be expected that more complex random effects can be incorporated in the model structures. In particular, the \pkg{R-INLA} method itself uses the same hierarchical model building technique to separate general complex structured random effect components (including spatial, temporal, and spatio-temporal models) from the observation models, that are only linked to the combined linear predictor values. For such more general models, the computational scaling cost benefits in comparison with \pkg{R-JAGS} should be even more pronounced, due to the sparse matrix algebra taken advantage of by \inla{}.


\bigskip
\begin{center}
{\large\bf SUPPLEMENTARY MATERIAL}
\end{center}

\begin{description}
	\item[R-package \pkg{dirinla}:] \proglang{R}-package \pkg{dirinla} containing code to perform the Dirichlet regression using \inla{} described in the article. The package also contains the necessary code to reproduce the real and simulation examples presented in the article. {\azul It can be installed from https://github.com/inlabru-org/dirinla}.
	\item[Simulations:] the document test.html (also in \url{https://jmartinez-minaya.github.io/supplementary/CODA/tests.html}) contains a complete scenario of simulations by checking the performance of the \pkg{dirinla} R-package.
\end{description}

\bibliographystyle{Chicago}

{\azul
\bibliography{state_of_the_art}
}

\newpage
\section*{Appendix}
\begin{appendix}
\appendix

\section{Equivalence between parametrizations} \label{appendix_param}
In this Appendix, we present the proof for the equivalence between the two parametrizations on Dirichlet regression models presented in the equations (\ref{eq:dirichlet_regression}) and (\ref{eq:dirichlet_regression_2}).

As mentioned in the text, we just need to construct a matrix $\ve{V}$ with dimension $N \times J$ which contains all the observed values for all the observations of all the covariates using in the different categories. Note that if the used covariates are the same in all the categories $\ve{V}$ is equal to $\ve{V}^{(c)}$. We rewrite equation (\ref{eq:dirichlet_regression}) as: {\azul
\begin{equation}\label{eq:dirichlet_regression_new}
	g(\alpha_{cn}) = \eta^*_{cn} = \vekey{V}_{n\bullet} \vekey{\tilde{\beta}}_{\bullet c} + \omega_{cn}\,\,,
\end{equation}}
being {\azul $\vekey{\tilde{\beta}}_{\bullet c}$} a $J$ column vector having $J_c$ non zero values, and being zero the rest of the elements; and equation \eqref{eq:dirichlet_regression_2}
{\azul 
\begin{linenomath}
\begin{equation}\label{eq:dirichlet_regression_2_new}
\mu_{cn} = \frac{\exp{(\eta^{*\ve{\mu}}_{cn})}}{\sum_{c = 1}^C \exp{(\eta^{*\ve{\mu}}_{cn})}} = \frac{\exp{(\vekey{V}_{n\bullet} \vekey{\tilde{\gamma}}_{\bullet c} + \omega_{cn}^{\vekey{\mu}})}}{\sum_{c = 1}^C \exp{(\vekey{V}_{n\bullet} \vekey{\tilde{\gamma}}_{\bullet c} + \omega_{cn}^{\vekey{\mu}})}}  \,\,.
\end{equation}
\end{linenomath}
being $\vekey{\tilde{\gamma}}^{c}$ a $J$ column vector having $J_c$ non zero values and being zero the rest of the elements. $\omega_{cn}^{\vekey{\mu} (c)}$ represents a realization of a random effect for this parametrization.}

In the alternative parametrization \eqref{eq:dirichlet_regression_new}, as we have $C$ categories for which the means must always sum up to $1$, we employ a Multinomial $\textit{logit}$ strategy as in Multinomial regression, where the linear predictor of one category is set to zero, whereby it is virtually omitted and becomes the reference. Let {\azul $\eta_{1n}^{*\ve{\mu}}$ be the reference, then:
\begin{equation}\label{eq:alternative_parametrisation}
  \eta_{1n}^{*\ve{\mu}} =  0 \,.\nonumber\\
\end{equation}
Rewriting the equation {\azul \eqref{eq:dirichlet_regression_2_new}} reveals that
\begin{eqnarray}\label{eq:equivalence_beta_gamma}
\mu_{cn} &=& \frac{\exp{(\vekey{V}_{n\bullet} \vekey{\tilde{\gamma}}_{\bullet c} + \omega_{cn})}}{\sum_{c=1}^C \exp{(\vekey{V}_{n\bullet} \vekey{\tilde{\gamma}}_{\bullet c} + \omega_{cn})}} =
\frac{\exp{(\vekey{V}_{n \bullet} \vekey{\tilde{\beta}}_{\bullet c} - \vekey{V}_{n \bullet} \vekey{\tilde{\beta}}_{\bullet 1} + \omega_{cn} - \omega_{1n})}}{\sum_{c=1}^C \exp{(\vekey{V}_{n \bullet} \vekey{\tilde{\beta}}_{\bullet c} - \vekey{V}_{n \bullet} \vekey{\tilde{\beta}}_{\bullet 1} + \omega_{cn} - \omega_{1n})}} \,,
\end{eqnarray}
so that
\begin{equation}
\tilde{\ve{\gamma}}_{\bullet c} = \tilde{\ve{\beta}}_{\bullet c} - \tilde{\ve{\beta}}_{\bullet 1}\,,
\end{equation}
and
\begin{equation}
\omega_{cn}^{\ve{\mu}} = \omega_{cn}- \omega_{1n}\,,
\end{equation}
as we wanted to proof.}

\section{Likelihood approximation effect} \label{append_likelihood_approx}
This Appendix presents both the proof for the Theorem \ref{theorem_laplace} and the Proposition \ref{proposition_laplace}, which is the expansion of the theorem to multiple observations.
{\azul
Let $\vekey{\eta}_{n}:=\vekey{\eta}_{\bullet n }$ denote the linear predictor corresponding to the $n$th observation $\vekey{y}_n:=\vekey{Y}_{ \bullet n}$; let also define $l(\vekey{y} \mid \vekey{x})=-\log p(\vekey{y} \mid \vekey{x})$ for any $\vekey{y}$ and $\vekey{x}$, and so, let $l(\vekey{y}_n \mid \vekey{\eta}_n)=-\log p(\vekey{y}_n\mid \vekey{\eta}_{n})$ denote the log-likelihood function expressed for the $n$th observation, being $\vekey{y}_n \in \mathbb{R}^C$ and $\vekey{\eta}_n \in \mathbb{R}^C$. Using the Taylor series expansion in vector $\vekey{\eta}^0_n$, we obtain the approximation:
\begin{linenomath}
\begin{align} \label{eq:log_likelihood_taylor_1_ap}
\lefteqn{l(\vekey{y}_n \mid \vekey{\eta}_n)  \approx}  &  \nonumber \\
& \approx  l(\vekey{y}_n \mid \vekey{\eta}^0_n) + [\nabla_{\vekey{\eta}_n}(l)(\vekey{\eta}^0_n, \vekey{y}_n)]^T [\vekey{\eta}_n - \vekey{\eta}^0_n] \nonumber \\
&\phantom{\approx }  + \frac{1}{2} [\vekey{\eta}_n - \vekey{\eta}^0_n]^T [\nabla^2_{\vekey{\eta}_n}(l)(\vekey{\eta}^0_n, \vekey{y}_n)] [\vekey{\eta}_n - \vekey{\eta}^0_n] \nonumber \\
& =  l(\vekey{y}_n \mid \vekey{\eta}^0_n) + (\vekey{g}^0_{\ve{\eta}_n})^T [\vekey{\eta}_n - \vekey{\eta}^0_n] + \frac{1}{2} [\vekey{\eta}_n - \vekey{\eta}^0_n]^T \vekey{H}^0_{\ve{\eta}_n} [\vekey{\eta}_n - \vekey{\eta}^0_n] \\
&= C_1 + \frac{1}{2}[\vekey{\eta}_n - (\vekey{\eta}^0_n -(\vekey{H}^0_{\ve{\eta}_n})^{-1} \vekey{g}^0_{\ve{\eta}_n})]^T \vekey{H}^0_{\ve{\eta}_n} [\vekey{\eta}_n - (\vekey{\eta}^0_n - (\vekey{H}^0_{\ve{\eta}_n})^{-1} \vekey{g}^0_{\ve{\eta}_n})] \,, \nonumber
\end{align}
\end{linenomath}
where $\vekey{g}^0_{\ve{\eta}_n}=\nabla_{\vekey{\eta}_n}(l)(\vekey{\eta}^0_n, \vekey{y}_n)$ and $\vekey{H}^0_{\ve{\eta}_n}$ is either the true Hessian ($\nabla^2_{\vekey{\eta}_n}(l)(\vekey{\eta}^0_n, \vekey{y}_n)$) or the expected Hessian ($\text{E}_{\vekey{y}_n \mid \vekey{\eta_n}} (\nabla^2_{\vekey{\eta}_n}(l)(\vekey{\eta}^0_n, \vekey{y}_n))$). $C_1$ is a constant whose value is $l(\vekey{y}_n \mid \vekey{\eta}^0_n) - \frac{1}{2} (\vekey{g}^0_{\ve{\eta}_n})^T (\vekey{H}^0_{\ve{\eta}_n})^{-1} \vekey{g}^0_{\ve{\eta}_n}$.}
{\azul 
Considering now the Cholesky factorization of $\vekey{H}^0_{\ve{\eta}_n}$, $\vekey{H}^0_{\ve{\eta}_n} = \vekey{L}^0_n (\vekey{L}^0_n)^T$, expression (\ref{eq:log_likelihood_taylor_1_ap}) can be rewritten as follows:
\begin{linenomath}
  \begin{align} \label{eq:log_likelihood_taylor_2}
\lefteqn{l(\vekey{y}_n \mid \vekey{\eta}_n)  \approx } \\
& \approx  C_1 + \frac{1}{2}[(\vekey{L}^0_n)^T \vekey{\eta}_n - (\vekey{L}^0_n)^T (\vekey{\eta}^0_n - (\vekey{H}^0_{\ve{\eta}_n})^{-1} \vekey{g}^0_{\ve{\eta}_n})]^T \nonumber \\
&\phantom{\approx C_1 + \frac{1}{2}}  [(\vekey{L}^0_n)^T \vekey{\eta}_n - (\vekey{L}^0_n)^T (\vekey{\eta}^0_n - (\vekey{H}^0_{\ve{\eta}_n})^{-1} \vekey{g}^0_{\ve{\eta}_n})] \nonumber  \,.
\end{align}
\end{linenomath}
Defining
\begin{linenomath}
\begin{align}\label{eq:z_var_definition_ap}
\vekey{z}^0_n &:= (\vekey{L}^0_n)^T [\vekey{\eta}^0_n - (\vekey{H}^0_{\ve{\eta}_n})^{-1} \vekey{g}^0_{\ve{\eta}_n}] = (\vekey{L}^0_n)^T \vekey{\eta}^0_n - (\vekey{L}^0_n)^{-1} \vekey{g}^0_{\ve{\eta}_n} \,,
\end{align}
\end{linenomath}
a conditionally Gaussian approximation is constructed.
\begin{linenomath}
\begin{align} \label{eq:log_likelihood_taylor_3}
\l(\vekey{y}_n\mid \vekey{\eta}_{n}) & \approx l(\ve{z}^0_n \mid \ve{\eta}_n) =
  C_1 +\frac{1}{2}[\vekey{z}^0_n-(\vekey{L}^0_n)^T \vekey{\eta}_n]^T  [\vekey{z}^0_n-(\vekey{L}^0_n)^T \vekey{\eta}_n] \,.
\end{align}
\end{linenomath}
Thus, $\vekey{z}^0_n \mid \vekey{\eta}_n \sim \mathcal{N}((\vekey{L}^0_n)^T \vekey{\eta}_n, \vekey{I}_{d})$, \textit{i.e.}, $z^0_{nc} \mid \vekey{\eta}_n \sim \mathcal{N}([(\vekey{L}^0_n)^T \vekey{\eta}_n]_c, 1)$. The observation vector $\vekey{y}_n$ has been converted into conditionally independent Gaussian pseudo-observations $\vekey{z}^0_n$. This approximation can be expanded to the $N$ observations.
}

To present a proof for Proposition (\ref{proposition_laplace}), we rewrite $l(\vekey{Y} \mid \vekey{\eta})$ for all the observations $N$ as in equation (\ref{eq:log_likelihood_taylor_2}).
\begin{linenomath}{\azul
\begin{align} \label{eq:log_likelihood_n_taylor_2}
\lefteqn{l(\vekey{Y} \mid \vekey{\eta})  \approx } \nonumber\\
& \approx  NC_1 + \frac{1}{2}\sum_{n=1}^n[(\vekey{L}^0_n)^T \vekey{\eta}_n - (\vekey{L}^0_n)^T (\vekey{\eta}^0_n - (\vekey{H}^0_{\ve{\eta}_n})^{-1} \vekey{g}^0_{\ve{\eta}_n})]^T  \\
&\phantom{\approx NC_1 + \frac{1}{2}\sum{n=1}^n }
[(\vekey{L}^0_n)^T \vekey{\eta}_n - (\vekey{L}^0_n)^T (\vekey{\eta}^0_n - (\vekey{H}^0_{\ve{\eta}_n})^{-1} \vekey{g}^0_{\ve{\eta}_n})] \nonumber \,.
\end{align}}
\end{linenomath}{\azul
Using the notation \[
  \vekey{\tilde{\eta}}^0=
    \underbrace{\begin{bmatrix}
      \vekey{\eta}^0_{\bullet 1} \\
      \vdots \\
      \vekey{\eta}^0_{\bullet N }
      \end{bmatrix}}_{CN \times 1} \,, \
  \vekey{g}^0_{\ve{\tilde{\eta}}}=
    \underbrace{\begin{bmatrix}
      \vekey{g}^0_1 \\
      \vdots \\
      \vekey{g}^0_N
      \end{bmatrix}}_{CN \times 1} \,, \
  \vekey{L}^0=
    \underbrace{\begin{bmatrix}
      \vekey{L}^0_1 & & 0   \\
      & \ddots & \\
      0 & & \vekey{L}^0_N
      \end{bmatrix}}_{CN \times CN} \,,
\]
\[
  \vekey{H}^0_{\ve{\tilde{\eta}}}=
    \underbrace{\begin{bmatrix}
      \vekey{H}^0_{\ve{\eta}_1} & & 0   \\
      & \ddots & \\
      0 & & \vekey{H}^0_{\ve{\eta}_N}
      \end{bmatrix}}_{CN \times CN} \,,
  \]}
equation (\ref{eq:log_likelihood_n_taylor_2}) can be rewritten as follows:
\begin{linenomath}{\azul
\begin{align} \label{eq:log_likelihood_n_taylor_3}
l(\vekey{Y} \mid \vekey{\tilde{\eta}}) & \approx    \nonumber\\
& \approx  NC_1 +  \frac{1}{2}[(\vekey{L}^0)^T \vekey{\tilde{\eta}} - (\vekey{L}^0)^T (\vekey{\tilde{\eta}}^0 - (\vekey{H}^0_{\ve{\tilde{\eta}}})^{-1} \vekey{g}^0_{\ve{\tilde{\eta}}})]^T \\
 &\phantom{\approx NC_1 + \frac{1}{2}}
 [(\vekey{L}^0)^T \vekey{\tilde{\eta}} - (\vekey{L}^0)^T (\vekey{\tilde{\eta}}^0 - (\vekey{H}^0_{\ve{\tilde{\eta}}})^{-1} \vekey{g}^0_{\ve{\tilde{\eta}}})] \nonumber\,\,.
\end{align}}
\end{linenomath}
Defining
\begin{linenomath}{\azul
\begin{align}\label{eq:z_var_definition_n}
\ve{\tilde{z}}^0 &:= (\vekey{L}^0)^T (\vekey{\tilde{\eta}}^0 - (\vekey{H}^0_{\ve{\tilde{\eta}}})^{-1} \vekey{g}^0_{\ve{\tilde{\eta}}}) = (\vekey{L}^0)^T \vekey{\tilde{\eta}}^0 - (\vekey{L}^0)^{-1} \vekey{g}^0_{\ve{\tilde{\eta}}} \,,
\end{align}}
\end{linenomath}
we obtain $p(\ve{\tilde{z}}^0 \mid \vekey{\tilde{\eta}})$,
\begin{linenomath}{\azul
\begin{align}
\ve{\tilde{z}}^0 \mid \vekey{\tilde{\eta}} & \sim \mathcal{N}((\vekey{L}^0)^T \vekey{\tilde{\eta}}, \vekey{I}_{CN})\,,
\end{align}}
\end{linenomath}
and the observation matrix $\vekey{Y}$ has been turned into Gaussian conditionally independent pseudo-observations $\ve{\tilde{z}}^0$, a likelihood which \pkg{R-INLA} can deal with.

\section{Calculus for the Dirichlet likelihood}\label{append_likelihood_dirichlet}
In this Appendix, we present all the calculus required for the particular case of the Dirichlet likelihood. For the sake of simplicity, we present those required just for one observation. We start with presenting the likelihood in terms of the linear predictor, we continue with the gradient, followed by the Hessian and finishing with the expected Hessian.
\subsection{Likelihood}
The density function corresponding to Dirichlet distribution has been depicted in expression (\ref{dirichlet}). Using the Dirichlet regression displayed in Equation (\ref{eq:dirichlet_regression}), we know that $\alpha_c = \exp(\eta_c), \ c=1, \ldots, C$. Then the density function or the likelihood for just one observation can be expressed as:
\begin{linenomath}
\begin{equation}
	p(\ve{y} \mid \eta_1, \ldots, \eta_C)= \frac{1}{\text{B}({\exp(\eta_1), \ldots, \exp(\eta_C)})} \prod_{c=1}^C y_c^{\exp(\eta_c) -1} \,.
\end{equation}
\end{linenomath}
Taking logarithms and using the definition of the B function, the next expression is obtained:
\begin{linenomath}
\begin{equation}
\log	p(\ve{y} \mid \vekey{\eta}) = \log \left(\frac{\Gamma(\exp(\eta_1)) \cdots \Gamma(\exp(\eta_C))}{\Gamma({\exp(\eta_1) + \cdots + \exp(\eta_C)})} \right) + \sum_{c=1}^C (\exp(\eta_c) -1) \log(y_c) \,.
    \label{dirichlet_eta}
\end{equation}
\end{linenomath}

\subsection{Gradient}
Here, the gradient of the log likelihood is calculated.
\begin{linenomath}
\begin{align}
  \frac{\partial l}{\partial \eta_c} & =  \exp(\eta_c) \left [\phi(\exp(\eta_c)) - \phi \left(\sum_{c=1}^C \exp(\eta_c)\right) \right] - \exp(\eta_c) \log(y_c) \,,
\end{align}
\end{linenomath}
where $c=1\ldots, C$ and $\phi$ is the digamma function.

\subsection{Hessian}
The second derivatives are calculated for the log likelihood. Let $c$ and $d$ two naturals such as $1 \leq c \leq C$ and $1 \leq d \leq C$, then
\begin{linenomath}
\begin{align}
  \frac{\partial^2 l}{\partial^2 \eta_c} & =  \exp(\eta_c) \left [\phi(\exp(\eta_c)) - \phi \left(\sum_{c=1}^C \exp(\eta_c)\right)\right] + \nonumber \\
  &\phantom{= } + \exp(2 \eta_c) \left [\phi^1(\exp(\eta_c)) - \phi^1 \left(\sum_{c=1}^C \exp(\eta_c)\right)\right] - \nonumber \\
  &\phantom{= } -  \exp(\eta_c) \log(y_c) \,
\intertext{and}
  \frac{\partial^2 l}{\partial \eta_c \partial \eta_d} & = -\exp(\eta_c) \exp(\eta_d) \left[\phi^1 \left(\sum_{c=1}^C \exp(\eta_c)\right)\right] \,, \nonumber
\end{align}
\end{linenomath}
where $\phi$ is the digamma function and $\phi^1$ is the trigamma function.

\subsection{Expected Hessian}
The expected second derivatives are calculated for the log likelihood. Thus
\begin{linenomath}
\begin{align}
  \text{E}\left(\frac{\partial^2 l}{\partial^2 \eta_c} \right) & =  \exp(2 \eta_c) \left [\phi^1(\exp(\eta_c)) - \phi^1 \left(\sum_{c=1}^C \exp(\eta_c)\right)\right]  \nonumber \,\\
\intertext{and}
  \text{E} \left(\frac{\partial^2 l}{\partial \eta_c \partial \eta_d} \right) & = -\exp(\eta_c) \exp(\eta_d) \left[\phi^1 \left(\sum_{c=1}^C \exp(\eta_c)\right)\right] \,, \nonumber
\end{align}
\end{linenomath}
where $\phi$ is the digamma function and $\phi^1$ is the trigamma function.

\end{appendix}

\end{document}